\DeclareMathOperator{\dom}{dom}
\DeclareMathOperator{\val}{val}
\DeclareMathOperator{\wt}{wt}
\DeclareMathOperator{\uWAL}{\bf WAL}
\DeclareMathOperator{\eWAL}{\bf eWAL}
\DeclareMathOperator{\MSO}{\bf MSO}
\DeclareMathOperator{\Const}{\text{\sc Const}}
\DeclareMathOperator{\Free}{\text{\sc Free}}
\DeclareMathOperator{\false}{\bf false}
\DeclareMathOperator{\Run}{Run}
\DeclareMathOperator{\code}{code}
\DeclareMathOperator{\Label}{label}
\newcommand{\A}{\mathcal A}
\newcommand{\N}{\mathbb N}
\newcommand{\zero}{\mathbb 0}
\newcommand{\one}{\mathbb 1}
\newcommand{\Ratio}{\text{\sc Ratio}}
\newcommand{\Disc}{\text{\sc Disc}}
\newcommand{\Energy}{\text{\sc Energy}}
\newcommand{\RR}{\overline{\mathbb R}}
\newcommand{\llangle}{\langle \! \! \, \langle}
\newcommand{\rrangle}{\rangle \! \! \, \rangle}
\begin{document}
\title{Weight Assignment Logic\thanks{This is the full version of the paper published at DLT 2015 \cite{Pe15}}}
\author{Vitaly Perevoshchikov\thanks{Supported by DFG Research Training Group 1763 (QuantLA)}}
\institute{Universit\"at Leipzig,  Institut f\"ur Informatik, \\ 
04109 Leipzig, Germany\\
\email{perev@informatik.uni-leipzig.de}
}

\maketitle

\begin{abstract}
We introduce a weight assignment logic for reasoning about quantitative languages of infinite words. 
This logic is an extension of the classical MSO logic and permits to describe quantitative properties of systems with multiple weight parameters, e.g., the ratio
between rewards and costs. We show that this logic is expressively equivalent to unambiguous weighted B\"uchi automata. We also consider an extension of weight assignment logic which is expressively equivalent to nondeterministic weighted B\"uchi automata. 
\begin{keywords}
quantitative omega-languages, quantitative logic, multi-weighted automata, B\"uchi automata, unambiguous automata
\end{keywords}

\end{abstract}

\section{Introduction}

Since the seminal B\"uchi theorem \cite{Buc60} about the expressive equivalence of finite automata and monadic second-order logic, a significant field of research investigates logical characterizations of language classes appearing from practically relevant automata models. In this paper we introduce a new approach to the logical characterization of quantitative languages of infinite words where every infinite word carries a value, e.g., a real number. 

Quantitative languages of infinite words and various weighted automata for them were investigated by Chatterjee, Doyen and Henzinger in \cite{CDH08} as models for verification of quantitative properties of systems. Their weighted automata are automata with a single weight parameter where a computation is evaluated using measures like the limit average or discounted sum. Recently, the problem of analysis and verification of systems with multiple weight parameters, e.g. time, costs and energy consumption, has received much attention in the literature \cite{BJLLS12, BGHJ09, BBL08, FJLLS11, FGR12, LR05}. For instance, the setting where a computation is evaluated as the ratio between accumulated rewards and costs was considered in \cite{BGHJ09, BBL08, FGR12}. Another example is a model of energy automata with several energy storages \cite{FJLLS11}.

\medskip

{\bf Related work.} Droste and Gastin \cite{DG07} introduced weighted MSO logic on finite words with constants from a semiring. In the semantics of their logic (which is a quantitative language of finite words) disjunction is extended by the sum operation of the semiring and conjunction is extended by the product. They show that weighted MSO logic is more expressive than weighted automata \cite{DKV09} (the unrestricted use of weighted conjunction and weighted universal quantifiers leads to unrecognizability) and provide a syntactically restricted fragment which is expressively equivalent to weighted automata. This result was extended in \cite{DR06} to the setting of infinite words. A logical characterization of the quantitative languages of Chatterjee, Doyen and Henzinger was given in \cite{DM12} (again by a restricted fragment of weighted MSO logic). In \cite{DP14}, a multi-weighted extension of weighted MSO logic of \cite{DM12} with the multiset-based semantics was considered. 

\medskip

{\bf Our contributions.} In this paper, we introduce a new approach to logic for quantitative languages, different from \cite{DG07,  DM12, DP14, DR06}. We develop a so-called {\em weight assignment logic} (WAL) on infinite words, an extension of the classical MSO logic to the quantitative setting. This logic allows us to assign weights (or multi-weights) to positions of an $\omega$-word. Using WAL, we can, for instance, express that whenever a position of an input word is labelled by letter $a$, then the weight of this position is $2$. As a weighted extension of the logical conjunction, we use the {\em merging} of partially defined $\omega$-words. In order to evaluate a partially defined $\omega$-word, we introduce a {\em default weight}, assign it to all positions with undefined weight, and evaluate the obtained totally defined $\omega$-word, e.g., as the reward-cost ratio or discounted sum. 

As opposed to the weighted MSO logic of \cite{DG07}, the weighted conjunction-like operators of WAL capture recognizability by weighted B\"uchi automata. We show that WAL is expressively equivalent to {\em unambiguous} weighted B\"uchi automata where, for every input $\omega$-word, there exists at most one accepting computation. Unambiguous automata are of considerable interest for automata theory as they can have better decidability properties. For instance, in the setting of finite words, the equivalence problem for unambiguous max-plus automata is decidable \cite{HIJ02} whereas, for nondeterministic max-plus automata, this problem is undecidable \cite{Krob94}. 

We also consider an extended version of WAL which captures nondeterministic weighted B\"uchi automata. In extended WAL we allow existential quantification over first-order and second-order variables in the prefix of a formula. The structure of extended WAL is similar to the structure of unweighted logics for, e.g., timed automata \cite{Wil94} and data automata \cite{Bou02}. 

For the proof of our expressiveness equivalence result, we establish a Nivat decomposition theorem for nondeterministic and unambiguous weighted B\"uchi automata. Recall that Nivat's theorem \cite{Ni68} is one of the fundamental characterizations of rational transductions and shows a connection between rational transductions and rational language. Recently, Nivat's theorem was proved for semiring-weighted automata on finite words \cite{DK11} and weighted multioperator tree automata \cite{SVF09}. We obtain similar decompositions for WAL and extended WAL and deduce our results from the classical B\"uchi theorem \cite{Buc60}. Our proof is constructive and hence decidability properties for WAL and extended WAL can be transferred into decidability properties of weighted B\"uchi automata. 
As a side application of our Nivat theorem, we can easily show that weighted B\"uchi automata and weighted Muller automata are expressively equivalent.

\medskip

{\bf Outline.} In Sect. \ref{Sect:WBA} we introduce a general framework for weighted B\"uchi automata and consider several examples. In Sect. \ref{Sect:Decomp} we prove a Nivat decomposition theorem for weighted B\"uchi automata. In Sect. 4 we define weight assignment logic and its extension. In Sect. \ref{Sect:Exp} we state our main result and give a sketch of its proof for the unambiguous and nondeterministic cases. 

\section{Weighted B\"uchi Automata}
\label{Sect:WBA}

Let $\N = \{0, 1, ...\}$ denote the set of all natural numbers. For an arbitrary set $X$, an {\em $\omega$-word} over $X$ is an infinite sequence $(x_i)_{i \in \N}$ where $x_i \in X$ for all $i \in \N$. Let $X^{\omega}$ denote the set of all $\omega$-words over $X$. Any set $\mathcal L \subseteq X^{\omega}$ is called an {\em $\omega$-language} \index{$\omega$-language} over $X$.

A {\em B\"uchi automaton} over an alphabet $\Sigma$ is a tuple $\A = (Q, I, T, F)$ where $Q$ is a finite set of states, $\Sigma$ is an alphabet (i.e. a finite non-empty set), $I, F \subseteq Q$ are sets of initial resp. accepting states, and $T \subseteq Q \times \Sigma \times Q$ is a transition relation.  A {\em run} ${\rho = (t_i)_{i \in \N}} \in T^{\omega}$ of $\A$ is defined as an infinite sequence of matching transitions which starts in an initial state and visits some accepting state infinitely often, i.e., $t_i = (q_{i}, a_i, q_{i+1})$ for each $i \in \N$, such that $q_0 \in I$ and ${\{q \in Q \; | \; q = q_i \text{ for infinitely many } i \in \N\} \cap F \neq \emptyset}.$
Let ${\Label(\rho) := (a_i)_{i \in \N} \in \Sigma^{\omega}}$, the {\em label} of $\rho$. We denote by $\Run_{\A}$ \index{$\Run_{\A}$} the set of all runs of $\A$ and, for each $w \in \Sigma^{\omega}$, we denote by $\Run_{\A}(w)$\index{$\Run_{\A}(w)$}  the set of all runs $\rho$ of $\A$ with $\Label(\rho) = w$.
Let $\mathcal L(\A) = {\{w \in \Sigma^{\omega} \; | \; \Run_{\A}(w) \neq \emptyset\}}$,  the $\omega$-language {\em accepted} by $\A$. We call an $\omega$-language $\mathcal L \subseteq \Sigma^{\omega}$ {\em recognizable} \index{$\omega$-language!recognizable} if there exists a B\"uchi automaton $\A$ over $\Sigma$ such that $\mathcal L(\A) = \mathcal L$.

We say that a monoid $\mathbb K = (K, +, \zero)$ is {\em complete} \index{monoid!complete} (cf., e.g., \cite{DR06}) if it is equipped with infinitary sum operations $\sum_I: K^I \to K$ for any index set $I$, such that, for all $I$ and all families $(k_i)_{i \in I}$ of elements of $K$, the following hold:
\begin{itemize}
\item $\sum_{i \in \emptyset} k_i = \zero$, \; $\sum_{i \in \{j\}} k_i = k_j$, \; $\sum_{i \in \{p,q\}} k_i = k_p + k_q$ for $p \neq q$;
\item $\sum_{j \in J} (\sum_{i \in I_j} k_i) = \sum_{i \in I} k_i$, if $\bigcup_{j \in J} I_j = I$ and $I_j \cap I_{j'} = \emptyset$ for $j \neq j'$.
\end{itemize}
Let $\RR = \mathbb R \cup \{-\infty, \infty\}$. Then, $\RR$ equipped with infinitary operations like {\em infinum} or {\em supremum} forms a complete monoid. Now we introduce an algebraic structure for weighted B\"uchi automata which is an extension of totally complete
semirings \cite{DR06}  and valuation monoids \cite{DM12} and covers various multi-weighted measures.

\begin{definition}
A {\em valuation structure} $\mathbb V = (M, \mathbb K, \val)$ consists of a non-empty set $M$, a complete monoid $\mathbb K = (K, +, \zero)$ and a
mapping $\val: M^{\omega} \to K$ called henceforth a {\em valuation function}.
\end{definition}

In the definition of a valuation structure we have two weight domains $M$ and $K$. Here $M$ is the set of transition weights which in the multi-weighted examples can be tuples of weights (e.g., a reward-cost pair) and $K$ is the set of  weights of computations which can be single values (e.g., the ratio between rewards and costs).

\begin{definition}
Let $\Sigma$ be an alphabet and  $\mathbb V = (M, (K, +, \zero), \val)$ a valuation structure. A {\em weighted B\"uchi automaton} (WBA) over $\mathbb V$ is a tuple $\A = (Q, I, T, F, \wt)$ where $(Q, I, T, F)$ is a B\"uchi automaton over $\Sigma$ and $\wt: T \to M$ is a transition weight function.
\end{definition}

The behavior of WBA is defined as follows. Given a run $\rho$ of this automaton, we evaluate the $\omega$-sequence of transition weights of $\rho$ (which is in $M^{\omega}$) using the valuation function $\val$ and then resolve the nondeterminism on the weights of runs using the complete monoid $\mathbb K$.  Formally, let $\rho = (t_i)_{i \in \N} \in T^{\omega}$ be a run of $\A$. Then, the {\em weight} of $\rho$ is defined as $\wt_{\A}(\rho) = \val((\wt(t_i))_{i \in \mathbb N}) \in K$. \index{$\wt_{\A}(\rho)$}The {\em behavior} of $\A$ is a mapping ${[\![\A]\!]: \Sigma^{\omega} \to K}$ \index{$[{\hspace{-0.05cm}}[\A]{\hspace{-0.05cm}}]$} defined for all $w \in \Sigma^{\omega}$  by $[\![\A]\!](w) = \sum (\wt_{\A}(\rho) \; | \; \rho \in \Run_{\A}(w)).$
Note that the sum in the equation above can be infinite. Therefore we consider a complete monoid $(K, +, \zero)$.
A mapping $\mathbb L: \Sigma^{\omega} \to K$ is called a {\em quantitative $\omega$-language}. \index{quantitative $\omega$-language} We say that $\mathbb L$ is (nondeterministically) {\em recognizable} over $\mathbb V$ if there exists a WBA $\A$ over $\mathbb V$ such that $[\![\A]\!] = \mathbb L$. 

We say that a WBA $\A$ over $\Sigma$ and $\mathbb V$ is {\em unambiguous} if $|\Run_{\A}(w)| \le 1$ for every $w \in \Sigma^{\omega}$. We call a quantitative $\omega$-language $\mathbb L: \Sigma^{\omega} \to K$  {\em unambiguously recognizable} over $\mathbb V$ if there exists an unambiguous WBA $\A$ over $\Sigma$ and $\mathbb V$ such that $[\![\A]\!] = \mathbb L$. 

\begin{example}
\label{Example:OmegaValuationStructures}
\begin{itemize}
\item [(a)] The ratio measure was introduced in \cite{BBL08}, e.g., for the modeling of the average costs in timed systems. In the setting of $\omega$-words, we consider the model with two weight parameters: the {\em cost} and the {\em reward}. The rewards and costs of transitions are accumulated along every finite prefix of a run and their ratio is taken. Then, the weight of an infinite run is defined as the limit superior (or limit inferior) of the sequence of the computed ratios for all finite prefixes. To describe the behavior of these double-priced ratio B\"uchi automata, we consider the valuation structure $\mathbb V^{\Ratio} = (M, \mathbb K, \val)$ where $M = \mathbb Q \times \mathbb Q_{\ge 0}$ models the reward-cost pairs, $\mathbb K = (\RR, \sup, -\infty)$ and $\val: M^{\omega} \to \RR$ is defined for every sequence $u = ((r_i, c_i))_{i \in \mathbb N} \in M^{\omega}$ by $\val(u) = \limsup_{n \to \infty} \frac{r_1 + ... + r_n}{c_1 + ... + c_n}$. Here, we assume that $\frac{r}{0} = -\infty$.
\item [(b)] {\em Discounting} \cite{And06, CDH08} is a well-known principle which is used in, e.g., economics and psychology. In this example, we consider WBA with transition-dependent discounting, i.e., are two weight parameters: the cost and the discounting factor (which is not fixed and depends on a transition). In order to define WBA with discounting formally, we consider 
the valuation structure ${\mathbb V^{\Disc} = (M, \mathbb K, \val)}$ where ${M = \mathbb Q_{\ge 0} \times ((0, 1] \cap \mathbb Q)}$ models the pairs of a cost and a discounting factor, $\mathbb K = (\mathbb R_{\ge 0} \cup \{\infty\}, \inf, \infty)$, and $\val$ is defined for all ${u = ((c_i, d_i))_{i \in \mathbb N} \in M^{\omega}}$  as
$
\val(u) = c_0 + \sum_{i = 1}^{\infty} c_i \cdot \prod_{j = 0}^{i-1} d_j.
$
\item [(c)] Now we consider the valuation structure for the model of multi-weighted automata which correspond to one-player energy games with lower bound considered in \cite{FJLLS11}. Let $n \ge 1$ and $s_1, ..., s_n$ be energy storages. We start with empty storages and, after taking a transition of a B\"uchi automaton, the energy level of each storage $s_j$ (${1 \le j \le n}$) can be increased (if we regain energy) or decreased (if we consume energy). The goal is to keep the energy level of every energy storage not less than zero. Consider the sequence $u = (u_i)_{u \in \mathbb N}$ where, for all $i \in \mathbb N$, $u_i = (u_i^1, ..., u_i^n) \in \mathbb Z^n$ is the vector of the energy level changes for each storage. We say that $u$ is {\em correct} if $\sum_{k = 0}^i u_k^j \ge 0$ for all $i \in \N$ and $j \in \{1, ..., n\}$. For this situation we consider the valuation structure $\mathbb V^{\Energy} = (M, \mathbb K, \val)$ where $M = \mathbb Z^n$, $\mathbb K = (\{0,1\}, \vee, 0)$ and, for all $u \in M^{\omega}$, we let $\val(u) = 1$ if $u$ is correct and $\val(u) = 0$ otherwise.
\item [(d)] Since a valuation monoid  $(K, (K, +, \zero), \val)$ of Droste and Meinecke \cite{DM12} is a special case of valuation structures, all examples considered there also fit into our framework.
\qed
\end{itemize}
\end{example}

\section{Decomposition of WBA}
\label{Sect:Decomp}

In this section, we establish a Nivat decomposition theorem for WBA. We will need it for the proof of our main result. However, it also could be of independent interest.

Let $\Sigma$ be an alphabet and $\mathbb V = (M, (K, +, \zero), \val)$ a valuation structure.
For a (possibly different from $\Sigma$) alphabet $\Gamma$, we introduce the following operations.
Let $\Delta$ be an arbitrary non-empty set and $h: \Gamma \to \Delta$ a mapping called henceforth a {\em renaming}. For any $\omega$-word $u = (\gamma_i)_{i \in \N} \in \Gamma^{\omega}$, we let $h(u) = (h(\gamma_i))_{i \in \N} \in \Delta^{\omega}$. Now let $h: \Gamma \to \Sigma$ be a renaming and $\mathbb L: \Gamma^{\omega} \to K$ a quantitative $\omega$-language. We define the {\em renaming} $h(\mathbb L): \Sigma^{\omega} \to K$ for all $w \in \Sigma^{\omega}$ by $h(\mathbb L)(w) = \sum \big( \mathbb L(u) \; | \; u \in \Gamma^{\omega} \text{ and } h(u) = w \big)$. For a renaming $g: \Gamma \to M$, the {\em composition} $\val \circ g: \Gamma^{\omega} \to K$ is defined for all $u \in \Gamma^{\omega}$ by $(\val \circ g)(u) = \val(g(u))$. Given a quantitative $\omega$-language $\mathbb L: \Gamma^{\omega} \to K$ and an $\omega$-language $\mathcal L \subseteq \Gamma^{\omega}$, the intersection $\mathbb L \cap \mathcal L: \Gamma^{\omega} \to K$ is defined for all $u \in \mathcal L$ as $(\mathbb L \cap \mathcal L)(u) = \mathbb L(u)$ and for all $u \in \Gamma^{\omega} \setminus \mathcal L$ as $(\mathbb L \cap \mathcal L)(u) = \zero$. Given a renaming $h: \Gamma \to \Sigma$ , we say that an $\omega$-language $\mathcal L \subseteq \Gamma^{\omega}$ is {\em $h$-unambiguous} if for all $w \in \Sigma^{\omega}$ there exists at most one $u \in \mathcal L$ such that $h(u) = w$.

Our Nivat decomposition theorem for WBA is the following.

\begin{theorem}
\label{Theorem:Nivat}
Let $\Sigma$ be an alphabet, $\mathbb V = (M, (K, +, \zero), \val)$ a valuation structure, and $\mathbb L: \Sigma^{\omega} \to K$ a quantitative $\omega$-language. Then
\begin{itemize}
\item [(a)] $\mathbb L$ is unambiguously recognizable over $\mathbb V$ iff there exist an alphabet $\Gamma$, renamings $h: \Gamma \to \Sigma$ and $g: \Gamma \to M$, and a recognizable and $h$-unambiguous $\omega$-language $\mathcal L \subseteq \Gamma^{\omega}$ such that $\mathbb L = h((\val \circ g) \cap \mathcal L)$. 
\item [(b)] $\mathbb L$ is nondeterministically recognizable over $\mathbb V$ iff there exist an alphabet $\Gamma$, renamings $h: \Gamma \to \Sigma$ and $g: \Gamma \to M$, and a recognizable $\omega$-language $\mathcal L \subseteq \Gamma^{\omega}$ such that $\mathbb L = h((\val \circ g) \cap \mathcal L)$.
\end{itemize}
\end{theorem}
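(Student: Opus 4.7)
I would prove both biconditionals by direct automaton constructions, obtaining~(a) from the same templates as~(b) by tracking one additional unambiguity invariant. For the ``only if'' direction, given a WBA $\A = (Q, I, T, F, \wt)$ with $[\![\A]\!] = \mathbb L$, I take $\Gamma := T$, the renamings $h((q, a, q')) := a$ and $g(t) := \wt(t)$, and let $\mathcal L \subseteq T^{\omega}$ be the set of accepting runs of $\A$ regarded as $\omega$-sequences of transitions. This $\mathcal L$ is recognizable by the unweighted Büchi automaton with state space $Q$, transitions $\{(p, t, p') \mid t = (p, a, p') \in T\}$, alphabet $\Gamma = T$, and initial resp.\ accepting state sets $I$ and $F$. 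By construction, for every $w \in \Sigma^{\omega}$ the fibre $\{u \in \mathcal L \mid h(u) = w\}$ is in bijection with $\Run_{\A}(w)$, and $\val(g(u)) = \wt_{\A}(u)$ on each such $u$, so $h((\val \circ g) \cap \mathcal L) = [\![\A]\!] = \mathbb L$. In part~(a), unambiguity of $\A$ (i.e.\ $|\Run_{\A}(w)| \le 1$) is exactly the $h$-unambiguity of $\mathcal L$.

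For the ``if'' direction, given $\Gamma$, $h$, $g$, and a recognizable $\mathcal L \subseteq \Gamma^{\omega}$, I would first fix an \emph{unambiguous} Büchi automaton $\mathcal B$ over $\Gamma$ for $\mathcal L$, whose existence is guaranteed by the classical theorem of Arnold that every $\omega$-regular language admits an unambiguous Büchi recognizer. I then build a WBA $\A'$ over $\Sigma$ and $\mathbb V$ by relabeling every transition $(p, \gamma, p')$ of $\mathcal B$ to $(p, h(\gamma), p')$ with weight $g(\gamma)$, keeping the initial and accepting state sets and, if needed, splitting states (e.g.\ by tagging the target with $\gamma$) so that distinct $\mathcal B$-transitions do not collapse under $h$ and the weight function stays well-defined. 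Runs of $\A'$ on $w$ are then in bijection with accepting runs of $\mathcal B$ whose label $u \in \Gamma^{\omega}$ satisfies $h(u) = w$; since $\mathcal B$ is unambiguous, each $u \in \mathcal L$ with $h(u) = w$ contributes $\val(g(u))$ exactly once to $[\![\A']\!](w)$, yielding $[\![\A']\!] = h((\val \circ g) \cap \mathcal L) = \mathbb L$. In part~(a), combining unambiguity of $\mathcal B$ with the $h$-unambiguity of $\mathcal L$ forces $|\Run_{\A'}(w)| \le 1$, so $\A'$ is itself unambiguous.

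The main obstacle is the ``if'' direction: without unambiguity of $\mathcal B$ the multiplicity $|\Run_{\mathcal B}(u)|$ would pollute the behavior sum and break the target identity, since the complete monoid could add the same $\val(g(u))$ several times. Invoking the classical unambiguous-Büchi theorem for $\omega$-regular languages is the crucial input, and it simultaneously makes the preservation of unambiguity in part~(a) go through without any determinization step that would destroy the assignment of weights $g(\gamma)$. The rest is the bookkeeping around the state-splitting and verifying the identity on behaviors, both of which are routine once the unambiguous $\mathcal B$ is in hand.
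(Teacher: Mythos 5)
Your proposal is correct and follows essentially the same route as the paper: the transition alphabet $\Gamma = T$ with the run language for the ``only if'' direction, and for the ``if'' direction the unambiguization of the B\"uchi recognizer (the paper cites Carton--Michel where you cite Arnold, but the input is the same) followed by the state-tagging product with $\Gamma$ to handle the renaming, with $h$-unambiguity of $\mathcal L$ supplying the unambiguity of the resulting WBA in part~(a). The ``splitting states by tagging the target with $\gamma$'' step you sketch is exactly the $Q \times \Gamma$ construction the paper borrows from Droste--Vogler.
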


\subsection{Proof of Theorem \ref{Theorem:Nivat}}

We start with part (b) of Theorem \ref{Theorem:Nivat}.

\begin{lemma}
\label{Lemma:3.2}
Let $\A$ be a WBA over $\Sigma$ and $\mathbb V$.  Then there exist an alphabet $\Gamma$, renaming $h: \Gamma \to \Sigma$ and $g: \Gamma \to M$, and a recognizable $\omega$-language $\mathcal L \subseteq \Gamma^{\omega}$ such that $[\![\A]\!] = h((\val \circ g) \cap \mathcal L)$.
\end{lemma}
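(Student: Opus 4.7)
The plan is to take $\Gamma$ to be the set of transitions of $\A$ and to encode each run of $\A$ as an $\omega$-word over this alphabet. Write $\A = (Q, I, T, F, \wt)$ and set $\Gamma := T$. Define the renaming $h : T \to \Sigma$ by $h(q, a, q') = a$ (projection to the input letter) and the renaming $g : T \to M$ by $g(t) = \wt(t)$. Finally, let $\mathcal L := \Run_{\A} \subseteq T^{\omega}$, the set of all runs of $\A$ viewed as infinite sequences of transitions.

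Next I would verify that $\mathcal L$ is a recognizable $\omega$-language over the alphabet $\Gamma = T$. This is a routine unweighted Büchi construction: build the automaton $\A' = (Q, I, T', F)$ over $\Gamma$ where a $\Gamma$-labelled transition $(q, t, q') \in T'$ exists exactly when $t = (q, a, q') \in T$ for some $a \in \Sigma$. Then every run of $\A'$ on an input $u = (t_i)_{i \in \N}$ is forced to follow the transitions spelled out by $u$ itself, so $\mathcal L(\A') = \Run_{\A}$.

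It then remains to check the functional identity $[\![\A]\!] = h((\val \circ g) \cap \mathcal L)$. Unfolding the definitions, for $w \in \Sigma^{\omega}$,
\begin{align*}
h((\val \circ g) \cap \mathcal L)(w)
&= \sum \bigl( \val(g(u)) \;\bigm|\; u \in \Run_{\A},\ h(u) = w \bigr).
\end{align*}
A sequence $u = (t_i)_{i \in \N} \in \Run_{\A}$ satisfies $h(u) = w$ iff $\Label(u) = w$, i.e.\ $u \in \Run_{\A}(w)$; and $g(u) = (\wt(t_i))_{i \in \N}$, so $\val(g(u)) = \wt_{\A}(u)$. Hence the sum equals $\sum(\wt_{\A}(\rho) \mid \rho \in \Run_{\A}(w)) = [\![\A]\!](w)$, as required.

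I do not foresee a genuine obstacle here: the construction is essentially the $\omega$-word analogue of Nivat's projection trick, and the complete-monoid axioms guarantee that the possibly infinite sum over runs is well-defined and compatible with the partition induced by the fibers of $h$. The only place to be a bit careful is making sure the Büchi automaton for $\Run_{\A}$ really enforces both the matching of consecutive transitions and the Büchi acceptance condition, which is handled by reusing $Q$, $I$, $F$ directly and letting the $\Gamma$-letters themselves dictate the transitions taken.
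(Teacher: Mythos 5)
Your construction is exactly the one the paper uses: $\Gamma = T$, $h$ and $g$ the label and weight projections, $\mathcal L$ the set of runs recognized by the self-labelled automaton $\A'$, followed by the same unfolding of the sum over $\Run_{\A}(w)$. The only (trivial) omission is the paper's remark that one may assume $T \neq \emptyset$ so that $\Gamma$ is a genuine alphabet.
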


\begin{proof}
The idea is as in \cite{DK11} to take the set of transitions as the extended alphabet $\Gamma$. Then, $h$ maps every transition to its label and $g$ maps every transition to its weight. Then, if in the underlying unweighted B\"uchi automaton we label every transition with itself, then we obtain the B\"uchi automaton accepting $\mathcal L$. 

Formally, let $\A = (Q, I, T, F, \wt)$. We may assume w.l.o.g. that $T \neq \emptyset$.We let ${\Gamma = T}$ and $h: \Gamma \to \Sigma$ be defined for all $t = (p, a, q) \in T$ as $h(t) = a$, and let ${g: T \to M}$ be defined for all $t \in T$ as $g(t) = \wt(t)$. We also define $\mathcal L$ by ${\mathcal L = \{\rho = (t_i)_{i \in \mathbb N} \; | \; \rho \text{ is a run of } \A\}}$. 

First we show that $\mathcal L$ is recognizable. Indeed, consider the B\"uchi automaton ${\A' = (Q, I, T', F)}$ over $\Gamma$ where $T' = \{(p, (p, a, q), q) \; | \; (p, a, q) \in T\}$. Then $\mathcal L(\A') = \mathcal L$ and hence $\mathcal L$ is recognizable. Finally we show that $[\![\A]\!] = h((\val^{\omega} \circ g) \cap \mathcal L)$. Let $w \in \Sigma^{\omega}$. Then
$$
h((\val^{\omega} \circ g) \cap \mathcal L)(w) = \sum_{\begin{subarray}{c} u \in \mathcal L, \\ h(u) = w \end{subarray}}
\val^{\omega}(g(u)) = \sum_{\rho \in \Run_{\A}(w)} \wt_{\A}(\rho) = [\![\A]\!](w).
$$
\qed
\end{proof}

Now we turn to the implication $\Leftarrow$.

\begin{lemma}
\label{Lemma:3.3}
Let $\Gamma$ be an alphabet, $h: \Gamma \to \Sigma$ and $g: \Gamma \to M$ renamings, and 
$\mathcal L \subseteq \Gamma^{\omega}$ a recognizable $\omega$-language. Then, the quantitative $\omega$-language 
$h((\val \circ g) \cap \mathcal L)$ is recognizable over $\mathbb V$.
\end{lemma}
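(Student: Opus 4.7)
The plan is to convert a B\"uchi automaton for $\mathcal L$ into a WBA over $\Sigma$ and $\mathbb V$ by relabeling each $\Gamma$-transition $(p, \gamma, q)$ as $(p, h(\gamma), q)$ and attaching weight $g(\gamma)$ to it. The immediate obstacle is that two transitions $(p, \gamma_1, q)$ and $(p, \gamma_2, q)$ with $h(\gamma_1) = h(\gamma_2)$ would collapse into a single WBA-transition after relabeling, even when $g(\gamma_1) \neq g(\gamma_2)$. A further obstacle is a multiplicity issue: for the behavior of the new WBA on $w$ to equal $h((\val \circ g) \cap \mathcal L)(w) = \sum_{u \in \mathcal L,\, h(u) = w} \val(g(u))$, the accepting runs of the new WBA on $w$ must \emph{biject} with the $u \in \mathcal L$ satisfying $h(u) = w$, not merely surject onto them.

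To handle the multiplicity issue I would first replace the given B\"uchi automaton by an equivalent \emph{unambiguous} one $\A = (Q, I, T, F)$ accepting $\mathcal L$, invoking the classical fact that every recognizable $\omega$-language is accepted by an unambiguous B\"uchi automaton (cf.\ \cite{DR06}). To handle the label-collapse obstacle I would expand the state space to record the most recent $\Gamma$-letter read. Concretely, pick a fresh symbol $\star \notin \Gamma$ and set $Q' = Q \times (\Gamma \cup \{\star\})$, $I' = I \times \{\star\}$, $F' = F \times (\Gamma \cup \{\star\})$, and for each $(p, \gamma, q) \in T$ and every $x \in \Gamma \cup \{\star\}$ include in $T'$ the transition $\tau = \bigl((p, x),\, h(\gamma),\, (q, \gamma)\bigr)$ with $\wt(\tau) = g(\gamma)$. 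Let $\A'$ denote the resulting WBA over $\Sigma$ and $\mathbb V$. Distinct $\gamma$'s lead to distinct target states, so no two such transitions are identified in $T'$.

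To finish, I would verify the announced equality. Given $w = (a_i)_{i \in \N} \in \Sigma^\omega$, any run of $\A'$ on $w$ must start with second coordinate $\star$, and thereafter its second coordinate at time $i+1$ is forced to equal some $\gamma_i \in \Gamma$ with $h(\gamma_i) = a_i$; its $Q$-coordinates then form an accepting run of $\A$ on $u := (\gamma_i)_{i \in \N}$ with $h(u) = w$, and its total $\wt_{\A'}$-weight equals $\val((g(\gamma_i))_{i \in \N}) = \val(g(u))$. Conversely, every pair $(u, \sigma)$ with $h(u) = w$ and $\sigma \in \Run_{\A}(u)$ lifts uniquely to a run of $\A'$ on $w$. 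Using unambiguity of $\A$ (so $|\Run_{\A}(u)| = 1$ if $u \in \mathcal L$ and $0$ otherwise),
\[
[\![\A']\!](w) \;=\; \sum_{\substack{u \in \Gamma^\omega \\ h(u) = w}} \sum_{\sigma \in \Run_{\A}(u)} \val(g(u)) \;=\; \sum_{\substack{u \in \mathcal L \\ h(u) = w}} \val(g(u)) \;=\; h((\val \circ g) \cap \mathcal L)(w),
\]
as required. The main subtle point I expect to dwell on is the reduction to an unambiguous $\A$: without it, each $u$ would contribute $|\Run_{\A}(u)|$ copies of $\val(g(u))$ to the sum, and in a non-idempotent complete monoid this alters the value and breaks the identity.
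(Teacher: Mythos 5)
Your proposal is correct and follows essentially the same route as the paper: first pass to an unambiguous B\"uchi automaton for $\mathcal L$ (the paper cites Carton--Michel \cite{CM00} for this, not \cite{DR06}), then apply the Droste--Vogler product construction $Q \times \Gamma$ that remembers the last $\Gamma$-letter to handle renaming; your fresh symbol $\star$ plays the role of the paper's fixed $\gamma_0$ in the initial states. The one difference is purely organizational: the paper splits the construction into a weighting step over $\Gamma$ followed by a renaming step, whereas you fuse them into a single automaton.
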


\begin{proof}
Since B\"uchi automata are not determinizable, the most challenging part in the proof is to show that recognizability of quantitative $\omega$-languages is stable under intersection with recognizable $\omega$-languages. Here we apply the result of \cite{CM00} that recognizable $\omega$-languages are recognizable by unambiguous B\"uchi automata. 

Let $\A$ be an unambiguous B\"uchi automaton over $\Gamma$ with $\mathcal L(\A) = \mathcal L$. If we associate with every transition $(p, \gamma, q)$ of $\A$ the weight $g(\gamma) \in M$, then we obtain the WBA $\mathcal B$ over $\Gamma$ and $\mathbb V$ with $[\![\mathcal B]\!] = (\val \circ g) \cap \mathcal L$. It remains to show that recognizable quantitative $\omega$-languages are closed under renaming. For this, we apply the construction of Droste and Vogler \cite{DV12}. Let $\mathcal B = (Q, I, T, F, \wt)$. Then we construct the WBA $\mathcal C = (Q', I', T', F', \wt')$ over $\Sigma$ and $\mathbb V$ defined as follows:
\begin{itemize}
\item $Q' = Q \times \Gamma$, $I' = I \times \{\gamma_0\}$ for some fixed $\gamma_0 \in \Gamma$, $F' = F \times \Gamma$;
\item $T'$ consists of all transitions $t = ((p, \gamma), a, (p', \gamma')) \in Q' \times \Sigma \times Q'$ such that $(p, \gamma', p') \in T$ and $h(\gamma') = a$. For such a transition $t$, we let $\wt'(t) = \wt(p, \gamma', p')$. 
\end{itemize}
Then $h([\![\mathcal B]\!]) = [\![\mathcal C]\!]$. Hence the quantitative $\omega$-language $h((\val \circ g) \cap \mathcal L)$ is recognizable over $\mathbb V$. 
\qed
\end{proof}

Then Theorem \ref{Theorem:Nivat}(b) follows immediately from Lemmas \ref{Lemma:3.2} and \ref{Lemma:3.3}.

The proof of Theorem \ref{Theorem:Nivat}(a) relies on the same constructions as the proof of Theorem \ref{Theorem:Nivat}(b). Note that in the proof of Lemma \ref{Lemma:3.2}, if $\A$ is unambiguous, the $\omega$-language $\mathcal L$ is $h$-unambiguous. Note also that the WBA $\mathcal B$ in the proof of Lemma \ref{Lemma:3.3} is unambiguous but, in general, $\mathcal C$ is not. However, $h$-unambiguity of $\mathcal L$ guarantees that $\mathcal C$ is unambiguous.

\subsection{Weighted Muller Automata}

As a first application of Theorem \ref{Theorem:Nivat} we show that WBA are expressively equivalent to {\em weighted Muller automata} which are defined as WBA with the difference that a set of accepting states $F \subseteq Q$ is replaced by a set $\mathcal F \subseteq 2^{Q}$ of sets of accepting states. Then, for an accepting run $\rho$, the set of all states, which are visited in $\rho$ infinitely often, must be in $\mathcal F$. 

\begin{theorem}
\label{Theorem:BuechiMuller}
Let $\Sigma$ be an alphabet, $\mathbb V = (M, (K, +, \zero), \val)$ a valuation structure and $\mathbb L: \Sigma^{\omega} \to K$ a quantitative $\omega$-language. Then $\mathbb L = [\![\A]\!]$ for some WBA $\A$ over $\Sigma$ and $\mathbb V$ iff $\mathbb L = [\![\A']\!]$ for some weighted Muller automaton $\A'$ over $\Sigma$ and $\mathbb V$.
\end{theorem}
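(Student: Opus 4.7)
The proof splits into the two implications of the ``iff'', and the hard half follows cleanly from the Nivat decomposition (Theorem \ref{Theorem:Nivat}(b)) that has already been established.

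For the direction WBA $\Rightarrow$ weighted Muller, I would give a direct simulation. Given $\A = (Q, I, T, F, \wt)$, I set $\A' = (Q, I, T, \mathcal F, \wt)$ with $\mathcal F = \{F' \subseteq Q \mid F' \cap F \neq \emptyset\}$. The underlying transitions and the transition weight function are unchanged, and a run visits some $q \in F$ infinitely often iff its set of infinitely occurring states lies in $\mathcal F$. Hence $\A$ and $\A'$ have the same accepting runs carrying the same weights, so $[\![\A]\!] = [\![\A']\!]$.

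For the direction weighted Muller $\Rightarrow$ WBA, I use the Nivat factorization. First I observe that the construction in Lemma \ref{Lemma:3.2} works for \emph{any} $\omega$-acceptance condition on the underlying automaton: given a weighted Muller automaton $\A$, taking $\Gamma = T$, $h(p,a,q) = a$, $g(t) = \wt(t)$, and $\mathcal L = \{\rho \in T^{\omega} \mid \rho \text{ is a run of } \A\}$, the same calculation yields $[\![\A]\!] = h((\val \circ g) \cap \mathcal L)$, and labelling each transition by itself gives an (unweighted) Muller automaton accepting $\mathcal L$. By the classical equivalence of unweighted Muller and B\"uchi automata, $\mathcal L$ is also recognizable by an unweighted B\"uchi automaton. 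I then invoke Lemma \ref{Lemma:3.3} to conclude that $h((\val \circ g) \cap \mathcal L)$ is recognizable by some WBA over $\Sigma$ and $\mathbb V$, which gives the desired $\A'$.

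The main obstacle that this approach avoids is the following: a naive lift of the textbook Muller-to-B\"uchi construction would nondeterministically guess which set $F' \in \mathcal F$ is visited infinitely often, and typically splits one Muller run into several B\"uchi runs. Since the sum in the complete monoid $\mathbb K$ need not be idempotent, such duplication can change the computed behavior. The Nivat decomposition sidesteps this by factoring the behavior into a purely quantitative part $\val \circ g$ and a purely structural $\omega$-language $\mathcal L$, so the classical unweighted Muller $=$ B\"uchi equivalence can be applied to $\mathcal L$ alone, after which Lemma \ref{Lemma:3.3} reassembles everything into a WBA without tampering with weights.
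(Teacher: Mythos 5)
Your proof is correct, and the hard direction (weighted Muller $\Rightarrow$ WBA) follows the paper's route exactly: the paper likewise observes that the construction of Lemma \ref{Lemma:3.2} is insensitive to the acceptance condition, so a weighted Muller automaton decomposes as $h((\val \circ g) \cap \mathcal L)$ with $\mathcal L$ recognizable, and then Lemma \ref{Lemma:3.3} reassembles a WBA; your remark about non-idempotent sums being the reason one cannot naively split runs is precisely the motivation the paper alludes to when it contrasts its approach with the ``direct non-trivial automata transformation'' of the earlier semiring result. Where you diverge is the easy direction: the paper proves WBA $\Rightarrow$ weighted Muller also via the decomposition, which forces it to establish a Muller analogue of Lemma \ref{Lemma:3.3} (taking a \emph{deterministic} Muller automaton for $\mathcal L$ and replacing $F'$ in the product construction $\mathcal C$ by a Muller condition $\mathcal F'$). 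Your direct simulation with $\mathcal F = \{F' \subseteq Q \mid F' \cap F \neq \emptyset\}$ is a legitimate shortcut here: it is an identity on runs, so it preserves the multiset of run weights and hence the behavior with no idempotency concerns, and it spares you the determinization step entirely. The only thing you lose relative to the paper is that the paper's symmetric treatment yields the full Nivat decomposition theorem for weighted Muller automata as a byproduct, which is of some independent interest; for the stated theorem alone, your hybrid argument is complete and somewhat more economical.
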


Theorem \ref{Theorem:BuechiMuller} extends the result of \cite{DR06} for totally complete semirings. Whereas the proof of \cite{DR06} was given by direct non-trivial automata transformation, our proof is based on the fact that weighted Muller automata permit the same decomposition as stated in Theorem \ref{Theorem:Nivat} for WBA. The constructions for this case are much the same as the constructions of Theorem \ref{Theorem:Nivat}(b). We only have to replace $F \subseteq Q$ by $\mathcal F \subseteq 2^Q$ in the proofs and slightly modify the constructions of Lemma \ref{Lemma:3.3}.
\begin{itemize}
\item It is well known that Muller automata are determinizable. Then, for the construction of the weighted Muller automaton for $(\val \circ g) \cap \mathcal L$ we use the fact that Muller and B\"uchi automata are expressively equivalent and take a deterministic Muller automaton recognizing $\mathcal L$.
\item In the definition of $\mathcal C$ in the proof of Lemma \ref{Lemma:3.3} we replace $F'$ by the Muller acceptance condition $\mathcal F'$ which consists of all sets $\{(q_1, \gamma_1), ..., (q_k, \gamma_k)\} \subseteq Q'$ such that $\{q_1, ..., q_k\} \in \mathcal F$ (a similar idea was used in \cite{DM12}). 
\end{itemize}

\section{Weight Assignment Logic}

\subsection{Partial $\omega$-words}

Before we give a definition of the syntax and semantics of our new logic, we introduce some auxiliary notions about partial
$\omega$-words. Let $X$ be an arbitrary non-empty set. A {\em partial $\omega$-word} over $X$ is a partial mapping $u: \mathbb N \dasharrow X$, i.e., $u: U \to X$ for some $U \subseteq \mathbb N$. Let $\dom(u) = U$, the {\em domain} of $u$. We denote by $X^{\uparrow}$ the set of all partial $\omega$-words over $X$. Clearly, $X^{\omega} \subseteq X^{\uparrow}$. A {\em trivial $\omega$-word} $\top \in X^{\uparrow}$ is the partial $\omega$-word with $\dom(\top) = \emptyset$. 
For $u \in X^{\uparrow}$, $i \in \N$ and $x \in X$, the {\em update} $u[i/x] \in X^{\uparrow}$ is defined as $\dom(u[i/x]) = \dom(u) \cup \{i\}$, $u[i/x](i) = x$ and $u[i/x](i') = u(i')$ for all $i' \in \dom(u) \setminus \{i\}$.
Let $\theta = (u_j)_{j \in J}$ be an arbitrary family of partial $\omega$-words $u_j \in X^{\uparrow}$ where $J$ is an arbitrary index set. We say that $\theta$ is {\em compatible} if, for all $j, j' \in J$ and $i \in \dom(u_j) \cap \dom(u_{j'})$, we have ${u_j(i) = u_{j'}(i)}$. 
If $\theta$ is compatible, then we define the {\em merging} $u := (\bigsqcap_{j \in J} u_j) \in X^{\uparrow}$ as ${\dom(u) = \bigcup_{j \in J} \dom(u_j)}$ and, for all ${i \in \dom(u)}$, $u(i) = u_j(i)$ whenever $i \in \dom(u_j)$ for some $j \in J$. 
Let $\theta = \{u_j\}_{j \in \{1,2\}}$ be compatible. Then, we write $u_1 \uparrow u_2$.  Clearly, the relation $\uparrow$ is reflexive and symmetric. In the case $u_1 \uparrow u_2$, for $\bigsqcap_{j \in \{1,2\}} u_j$ we will also use notation $u_1 \sqcap u_2$.

\begin{example}
Let $X = \{a, b\}$ with $a \neq b$ and $u_1 = a^{\omega} \in X^{\uparrow}$. Let $u_2 \in X^{\uparrow}$ be the partial $\omega$-word whose domain $\dom(u_2)$ is the set of all odd natural numbers and $u_2(i) = a$ for all $i \in \dom(u_2)$. Let $u_3 \in X^{\uparrow}$ be the partial $\omega$-word such that $\dom(u_3)$ is the set of all even natural numbers and $u_3(i) = b$ for all $i \in \dom(u_3)$. Then $u_1 \uparrow u_2$ and $u_2 \uparrow u_3$,  but $\lnot (u_1 \uparrow u_3)$. This shows in particular that the relation $\uparrow$ is not transitive if $X$ is not a singleton set.  Then, $u_1 \sqcap u_2 = a^{\omega}$ and $u_2 \sqcap u_3 = (ba)^{\omega}$.
\end{example}

\subsection{WAL: Syntax and Semantics}

Let $V_1$ be a countable set of {\em first-order variables} and $V_2$ a countable set of {\em second-order variables}  such that $V_1 \cap V_2 = \emptyset$. Let $V = V_1 \cup V_2$.  Let $\Sigma$ be an alphabet and $\mathbb V = (M, (K, +, \zero), \val)$ a valuation structure. We also consider a designated element $\one \in M$ which we call the {\em default weight}. We denote the pair $(\mathbb V, \one)$ by $\mathbb V_{\one}$. The set $\uWAL(\Sigma, \mathbb V_{\one})$ of formulas of {\em weight assignment logic} over $\Sigma$ and $\mathbb V_{\one}$ is given by the grammar
$$
\varphi \; ::= \; P_a(x) \; | \; x = y \; | \; \; x < y \; | \; X(x) \; | \;  x \mapsto m \; | \; \varphi \Rightarrow \varphi \; | \;  \varphi \sqcap \varphi \; | \; {\sqcap} x. \varphi \; | \; {\sqcap} X. \varphi
$$
where $a \in \Sigma$, $x, y \in V_1$, $X \in V_2$ and $m \in M$. Such a formula $\varphi$ is called a {\em weight assignment formula}. 

Let $\varphi \in \uWAL(\Sigma, \mathbb V_{\one})$. We denote by $\Const(\varphi) \subseteq M$ the set of all weights $m \in M$ occurring in $\varphi$. The set $\Free(\varphi) \subseteq V$ of {\em free variables} of $\varphi$ is defined to be the set of all variables $\mathcal X \in V$ which appear in $\varphi$ and are not bound by any quantifier ${\sqcap} \mathcal X$.
We say that $\varphi$ is a {\em sentence} if $\Free(\varphi) = \emptyset$.

Note that the merging as defined before is a partially defined operation, i.e., it is defined only for compatible families of partial $\omega$-words. In order to extend it to a totally defined operation, we fix an element $\bot \notin M^{\uparrow}$ which will mean the undefined value. Let $M^{\uparrow}_{\bot} = M^{\uparrow} \cup \{\bot\}$. Then, for any family $\theta = (u_j)_{j \in J}$ with $u_j \in M^{\uparrow}_{\bot}$, such that either $\theta \in (M^{\uparrow})^J$ is not compatible or $\theta \in (M_{\bot}^{\uparrow})^J \setminus (M^{\uparrow})^J$, we let $\bigsqcap_{j \in J} u_j = \bot$.

For any $\omega$-word $w \in \Sigma^{\omega}$, a {\em $w$-assignment} is a mapping $\sigma: V \to \dom(w) \cup 2^{\dom(w)}$ mapping first-order variables to elements in $\dom(w)$ and second-order variables to subsets of $\dom(w)$.
For a first-order variable $x$ and a position $i \in \mathbb N$, the $w$-assignment $\sigma[x/i]$ is defined on $V \setminus \{x\}$ as $\sigma$, and we let $\sigma[x/i](x) = i$. For a second-order variable $X$ and a subset $I \subseteq \mathbb N$, the $w$-assignment $\sigma[X/I]$ is defined similarly.  Let $\Sigma^{\omega}_V$ denote the set of all pairs $(w, \sigma)$ where $w \in \Sigma^{\omega}$ and $\sigma$ is a $w$-assignment. We will denote such pairs $(w, \sigma)$ by $w_{\sigma}$. 

The semantics of $\uWAL$-formulas is defined in two steps: by means of the auxiliary and proper semantics. 
Let $\varphi \in \uWAL(\Sigma, \mathbb V_{\one})$. The {\em auxiliary semantics} of $\varphi$ is the mapping $\llangle \varphi \rrangle: \Sigma_V^{\omega} \to M^{\uparrow}_{\bot}$ defined for all $w_{\sigma} \in \Sigma_V^{\omega}$ with $w = (a_i)_{i \in \N}$ as shown in Table \ref{Table:Auxiliary_Semantics}. Note that the definition of $\llangle .. \rrangle$ does not employ $+$ and $\val$. The {\em proper semantics} ${[\![\varphi]\!]: \Sigma^{\omega}_V \to K}$ operates on the auxiliary semantics $\llangle \varphi \rrangle$ as follows. Let $w_{\sigma} \in \Sigma_V^{\omega}$. If $\llangle \varphi \rrangle (w_{\sigma}) \in M^{\uparrow} $, then we assign the default weight to all undefined positions in $\dom(\llangle \varphi \rrangle (w_{\sigma}))$ and evaluate the obtained sequence using $\val$. Otherwise, if $\llangle \varphi \rrangle (w_{\sigma}) = \bot$, we put $[\![\varphi]\!](w_{\sigma}) = \zero$. Note that if $\varphi \in \uWAL(\Sigma, \mathbb V_{\one})$ is a sentence, then the values $\llangle \varphi \rrangle (w_{\sigma})$ and $[\![\varphi]\!](w_{\sigma})$ do not depend on $\sigma$ and we consider the auxiliary semantics of $\varphi$ as 
the mapping $\llangle \varphi \rrangle : \Sigma^{\omega} \to M^{\uparrow}_{\bot}$ and the proper semantics of $\varphi$ as the quantitative $\omega$-language $[\![\varphi]\!]: \Sigma^{\omega} \to K$. 
Note that $+$ was not needed for the semantics of $\uWAL$-formulas. This operation will be needed in the next section for the extension of $\uWAL$. We say that a quantitative $\omega$-language $\mathbb L: \Sigma^{\omega} \to K$ is {\em $\uWAL$-definable} over $\mathbb V$ if there exist a default weight $\one \in M$ and a sentence ${\varphi \in \uWAL(\Sigma, \mathbb V_{\one})}$ such that $[\![\varphi]\!] = \mathbb L$.

\begin{table}[t]
{\footnotesize
\begin{center}
\begin{minipage}{0.49\linewidth}
$\begin{array}{@{}l@{\hspace{0.1cm}}l@{\hspace{0.1cm}}l}
\llangle P_a(x) \rrangle (w_{\sigma}) & = & \begin{cases}
\top, & a_{\sigma(x)} = a \\
\bot, & \text{otherwise}
\end{cases} \\
\llangle x = y \rrangle (w_{\sigma}) &= & \begin{cases}
\top, & \sigma(x) = \sigma(y) \\
\bot, & \text{otherwise}
\end{cases} \\
\llangle x < y \rrangle (w_{\sigma}) &= & \begin{cases}
\top, & \sigma(x) < \sigma(y) \\
\bot, & \text{otherwise}
\end{cases} \\
\llangle X(x) \rrangle (w_{\sigma}) & = & \begin{cases}
\top, &  \sigma(x) \in \sigma(X) \\
\bot, & \text{otherwise}
\end{cases} 
\end{array}$ 
\end{minipage}
\hfill
\begin{minipage}{0.49\linewidth}
$\begin{array}{@{\hspace{-0.7cm}}l@{\hspace{0.1cm}}l@{\hspace{0.1cm}}l}
\llangle x \mapsto m \rrangle (w_{\sigma}) & = & \top[\sigma(x)/m] \\
\llangle \varphi_1 \Rightarrow \varphi_2 \rrangle (w_{\sigma}) & = & \begin{cases}
\llangle \varphi_2 \rrangle (w_{\sigma}), & \llangle \varphi_1 \rrangle (w_{\sigma}) = \top \\
\top, & \text{otherwise} \end{cases} \\
\llangle \varphi_1 \sqcap \varphi_2 \rrangle (w_{\sigma}) & = & \llangle \varphi_1 \rrangle (w_{\sigma}) \sqcap \llangle \varphi_2 \rrangle (w_{\sigma}) \\
\llangle {\sqcap} x. \varphi \rrangle (w_{\sigma}) & = & \bigsqcap_{i \in \dom(w)} \llangle \varphi \rrangle (w_{\sigma[x/i]}) \\
\llangle {\sqcap} X. \varphi \rrangle (w_{\sigma}) & = & \bigsqcap_{I \subseteq \dom(w)} \llangle \varphi \rrangle (w_{\sigma[X/I]})
\end{array}$ 
\end{minipage}
\end{center}
}
\caption{The auxiliary semantics of $\uWAL$-formulas}
\label{Table:Auxiliary_Semantics}
\end{table}

\begin{example} 
Consider a valuation structure $\mathbb V = (M, (K, +, \zero), \val)$ and a default weight $\one \in M$. Consider an alphabet $\Sigma = \{a, b, ...\}$ of actions. We assume that the cost of $a$ is $c(a) \in M$, the cost of $b$ is $c(b) \in M$, and the costs of all other actions $x$ in $\Sigma$ are equal to $c(x) = \one$ (which can mean, e.g., that these actions do not invoke any costs). Then every $\omega$-word $w$ induces the $\omega$-word of costs. We want to construct a sentence of our WAL which for every such an $\omega$-word will evaluate its sequence of costs using $\val$. The desired sentence $\varphi \in \uWAL(\Sigma, \mathbb V_{\one})$ is
$$
{\varphi = {\sqcap} x. ([P_a(x) \Rightarrow (x \mapsto c(a))] \sqcap [P_b(x) \Rightarrow (x \mapsto c(b))]).}
$$
Then, for every $w = (a_i)_{i \in \mathbb N} \in \Sigma^{\omega}$, the auxiliary semantics $\llangle \varphi \rrangle(w)$ is the partial $\omega$-word over $M$ where all positions $i \in \N$ with $a_i = a$ are labelled by $c(a)$, all positions with $a_i = b$ are labelled by $c(b)$, and the labels of all other positions are undefined. Then, the proper semantics $[\![\varphi]\!](w)$ assigns $\one$ to all positions with undefined labels and evaluates it by means of $\val$.
\end{example}

\subsection{WAL: Relation to MSO Logic}

Let $\Sigma$ be an alphabet. We consider monadic second-order logic $\MSO(\Sigma)$ over $\omega$-words to be the set of formulas 
$$
\varphi \; ::= \; P_a(x) \; | \; x = y  \; | \; x < y \; | \; X(x) \; | \; \varphi \wedge \varphi \; | \; \lnot \varphi \; | \; \forall x. \varphi \; | \; \forall X. \varphi
$$
where $a \in \Sigma$, $x,y \in V_1$ and $X \in V_2$. For $w_{\sigma} \in \Sigma^{\omega}_V$, the satisfaction relation $w_{\sigma} \models \varphi$ is defined as usual. The usual formulas of the form $\varphi_1 \vee \varphi_2$, $\exists \mathcal X. \varphi$ with $\mathcal X \in V$, $\varphi_1 \Rightarrow \varphi_2$ and $\varphi_1 \Leftrightarrow \varphi_2$ can be expressed using $\MSO$-formulas.

For any formula $\varphi \in \MSO(\Sigma)$, let $W(\varphi)$ denote the $\uWAL$-formula obtained from $\varphi$ by replacing $\wedge$ by $\sqcap$, $\forall \mathcal X$ (with $\mathcal X \in V$) by $\sqcap \mathcal X$, and every subformula $\lnot \psi$ by ${\psi \Rightarrow \false}$. Here $\false$ can be considered as abbreviation of the sentence ${\sqcap} x. (x < x)$.
Note that $W(\varphi)$ does not contain any assignment formulas $x \mapsto m$ and $\llangle W(\varphi) \rrangle (w_{\sigma}) \in \{\top, \bot\}$ for every $w_{\sigma} \in \Sigma_V^{\omega}$. Moreover, it can be easily shown by induction on the structure of $\varphi$ that, for all $w_{\sigma} \in \Sigma_V^{\omega}$: $w_{\sigma} \models \varphi$ iff $\llangle W(\varphi) \rrangle (w_{\sigma}) = \top$. This shows that MSO logic on infinite words is subsumed by $\uWAL$. For the formulas which do not contain any assignments of the form $x \mapsto m$, the merging $\sqcap$ can be considered as the usual conjunction and the merging quantifiers ${\sqcap} \mathcal X$ as the usual universal quantifiers $\forall \mathcal X$. Moreover, $\top$ corresponds to the boolean true value and $\bot$ to the boolean false value.

For a $\uWAL$-formula $\varphi$, we will consider $\lnot \varphi$ as abbreviation for $\varphi \Rightarrow \false$.

\subsection{Extended WAL}

Here we extend $\uWAL$ with weighted existential quantification over free variables in $\uWAL$-formulas.
Let $\Sigma$ be an alphabet, $\mathbb V = (M, (K, +, \zero), \val)$ a valuation structure and $\one \in M$ a default weight. The set $\eWAL(\Sigma, \mathbb V_{\one})$ of formulas of {\em extended weight assignment logic} over $\Sigma$ and $\mathbb V_{\one}$ consists of all formulas of the form ${\sqcup} \mathcal X_1. \; ... \; {\sqcup} \mathcal X_k. \varphi$ where $k \ge 0$, $\mathcal X_1, ..., \mathcal X_k \in V$ and $\varphi \in \uWAL(\Sigma, \mathbb V_{\one})$.
Given a formula $\varphi \in \eWAL(\Sigma, \mathbb V_{\one})$, the {\em semantics} of $\varphi$ is the mapping $[\![\varphi]\!]: \Sigma^{\omega}_V \to K$ defined inductively as follows. If $\varphi \in \uWAL(\Sigma, \mathbb V_{\one})$, then $[\![\varphi]\!]$ is defined as the proper semantics for $\uWAL$. If $\varphi$ contains a prefix ${\sqcup} x$ with $x \in V_1$ or ${\sqcup} X$ with $X \in V_2$, then, for all $w_{\sigma} \in \Sigma_V^{\omega}$, $[\![\varphi]\!](w_{\sigma})$ is defined inductively as shown in Table \ref{Table:eWAL_Semantics}. Again, if $\varphi$ is a sentence, then we can consider its semantics as the quantitative $\omega$-language $[\![\varphi]\!]: \Sigma^{\omega} \to K$. We say that a quantitative $\omega$-language $\mathbb L: \Sigma^{\omega} \to K$ is {\em $\eWAL$-recognizable} over $\mathbb V$ if there exist a default weight $\one \in M$ and a sentence $\varphi \in \eWAL(\Sigma, \mathbb V_{\one})$ such that $[\![\varphi]\!] = \mathbb L$.

\begin{table}[t]
$$\begin{array}{@{}l@{\hspace{0.1cm}}l@{\hspace{0.1cm}}l}
\,[\![\sqcup x. \varphi]\!](w_{\sigma}) &=& \sum\big( [\![\varphi]\!](w_{\sigma[x/i]}) \; | \; i \in \dom(w) \big) \\
\,[\![\sqcup X. \varphi]\!](w_{\sigma}) &=& \sum\big( [\![\varphi]\!](w_{\sigma[X/I]}) \; | \; I \subseteq \dom(w) \big)
\end{array}$$
\caption{The semantics of $\eWAL$-formulas}
\label{Table:eWAL_Semantics}
\end{table}

\begin{example}
Let $\Sigma = \{a\}$ be a singleton alphabet, $\mathbb V = \mathbb V^{\Disc}$ as defined in Example \ref{Example:OmegaValuationStructures}(b). Assume that, for every position of an $\omega$-word, we can either assign to this position the cost $5$ and the discounting factor $0.5$ or we assign the cost the smaller cost $2$ and the bigger discounting factor $0.75$. After that we compute the discounted sum using the valuation function of $\mathbb V^{\Disc}$. We are interested in the infimal value of this discounted sum. We can express it by means of the $\eWAL$-formula
$$
{\varphi = {\sqcup X}. {\sqcap x}. ([X(x) \Rightarrow (x \mapsto (5, 0.5))] \sqcap [(\lnot X(x)) \Rightarrow (x \mapsto (2, 0.75))])}
$$
i.e. $[\![\varphi]\!](a^{\omega})$ is the desired infimal value.
\end{example}

\section{Expressiveness Equivalence Result}
\label{Sect:Exp}

In this section we state and prove the main result of this paper. 

\begin{theorem}
\label{Theorem:Main}
Let $\Sigma$ be an alphabet, $\mathbb V = (M, (K, +, \zero), \val)$ a valuation structure and $\mathbb L: \Sigma^{\omega} \to K$ a quantitative $\omega$-language. Then
\begin{itemize}
\item [(a)] $\mathbb L$ is $\uWAL$-definable over $\mathbb V$ iff $\mathbb L$ is unambiguously recognizable over $\mathbb V$.
\item [(b)] $\mathbb L$ is $\eWAL$-definable over $\mathbb V$ iff $\mathbb L$ is recognizable over $\mathbb V$.
\end{itemize}
\end{theorem}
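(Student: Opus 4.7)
My strategy is to use the Nivat decomposition theorem (Theorem~\ref{Theorem:Nivat}) as the common bridge in both parts and both directions. For the direction from automata to logic, assume $\mathbb L$ is (unambiguously) recognizable and apply Theorem~\ref{Theorem:Nivat} to obtain $\Gamma=\{\gamma_1,\dots,\gamma_n\}$, renamings $h\colon\Gamma\to\Sigma$, $g\colon\Gamma\to M$, and a recognizable $\omega$-language $\mathcal L\subseteq\Gamma^{\omega}$ such that $\mathbb L=h((\val\circ g)\cap\mathcal L)$, with $\mathcal L$ being $h$-unambiguous in case~(a). Encode each letter $\gamma_j$ by a second-order variable $X_j$ marking the positions labelled by $\gamma_j$. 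By the classical B\"uchi theorem together with the embedding $W(\cdot)$ of Section~4.3, there is a WAL formula $\psi(X_1,\dots,X_n)$ whose auxiliary semantics on $w_\sigma$ is $\top$ iff $(X_1,\dots,X_n)$ encodes a $u\in\mathcal L$ with $h(u)=w$, and $\bot$ otherwise. The weights are placed by
\[
\chi \;=\; \sqcap x.\bigsqcap_{j=1}^{n}\bigl(X_j(x)\Rightarrow (x\mapsto g(\gamma_j))\bigr),
\]
so that, under any valid assignment, $\llangle\chi\rrangle$ is precisely $g(u)$. For case~(b) I take
\[
\varphi \;=\; \sqcup X_1\cdots\sqcup X_n.(\psi\sqcap\chi);
\]
invalid assignments produce $\bot$ and contribute $\zero$ to the sum, while each valid one contributes $\val(g(u))$, giving $[\![\varphi]\!](w)=\mathbb L(w)$. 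For case~(a), where $\sqcup$ is not available, I will use the merging quantifiers together with an existence guard:
\[
\varphi \;=\; W(\xi)\;\sqcap\;\sqcap X_1\cdots\sqcap X_n.(\psi\Rightarrow\chi),
\]
where $\xi\in\MSO(\Sigma)$ expresses $w\in h(\mathcal L)$. At the unique valid assignment $X^{*}$, $\psi\Rightarrow\chi$ yields $\chi(X^{*})$, while all other assignments yield $\top$ (since $\psi$ is then $\bot$); their merging is $\chi(X^{*})$, and $W(\xi)=\top$ preserves this. If no valid assignment exists, the merging is $\top$ but $W(\xi)=\bot$, so the whole formula is $\bot$ and contributes $\zero$.

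For the direction from logic to automata, I would invoke Theorem~\ref{Theorem:Nivat} in reverse. The plan is to prove by structural induction on $\varphi$ that $[\![\varphi]\!]$ admits a Nivat decomposition satisfying the hypotheses of Theorem~\ref{Theorem:Nivat}(a) (resp.~(b) for $\eWAL$). Intermediate formulas with free variables are handled over the standard extended alphabet $\Sigma_V^{\omega}$. The base cases $P_a(x)$, $x=y$, $x<y$, $X(x)$ and $x\mapsto m$ yield trivial decompositions; $\sqcap$ and $\Rightarrow$ correspond to product constructions on the recognizable component combined with the merging/override of the $g$ component; $\sqcap x$ and $\sqcap X$ correspond to projections away from the bound variable, coupled with an $h$-unambiguity-preserving treatment of its value; and for $\eWAL$ the prefix $\sqcup$-quantifiers are plain projections that drop $h$-unambiguity, matching Theorem~\ref{Theorem:Nivat}(b). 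The default weight $\one$ used in the proper semantics to fill undefined positions corresponds to enlarging $\Gamma$ by a letter of $g$-weight $\one$ that labels otherwise uncovered positions.

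The main obstacle will be the inductive step for the merging quantifiers $\sqcap x$ and $\sqcap X$ while preserving $h$-unambiguity of the recognizable component. Classically, universal quantification is handled by complementation of B\"uchi automata, but here the partial-$\omega$-word semantics of $\sqcap$ interacts nontrivially with nested weight-assignment subformulas: compatibility of the merged partial $\omega$-words becomes a global consistency constraint that must be localized so it can be enforced by a finite-state automaton. The intended remedy is to let $\Gamma$ carry, in addition to the letter and assignment information, a sufficiently fine record of the weight contribution of every subformula so that compatibility becomes a position-wise check; $h$-unambiguity is then preserved through the final projection onto $\Sigma$. Once this is in place, a single application of Theorem~\ref{Theorem:Nivat} converts the decomposition into an (un)ambiguous WBA realizing $[\![\varphi]\!]$.
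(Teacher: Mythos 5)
Your automata-to-logic direction is essentially the paper's: after the Nivat decomposition you encode the extended alphabet by second-order variables, use $W(\cdot)$ of a B\"uchi formula for the unweighted part, and attach weights with $\sqcap x.\bigsqcap_j(X_j(x)\Rightarrow(x\mapsto g(\gamma_j)))$; your guarded merging formula for (a) and your $\sqcup X_1\cdots\sqcup X_n.(\psi\sqcap\chi)$ for (b) both work (the latter even handles invalid assignments more cleanly than the paper's $W(\beta)\Rightarrow\cdots$ variant, since $\bot$ contributes $\zero$ to the sum).

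The gap is in the logic-to-automata direction. First, the induction invariant is wrong: the proper semantics $[\![\cdot]\!]$ is not compositional --- $[\![\zeta_1\sqcap\zeta_2]\!]$ is not a function of $[\![\zeta_1]\!]$ and $[\![\zeta_2]\!]$, because $\val$ and the default-weight filling are applied only once, at the top --- so ``prove by structural induction that $[\![\zeta]\!]$ admits a Nivat decomposition'' cannot be pushed through the connectives. The induction must be carried out on the \emph{auxiliary} semantics: the paper fixes one global alphabet $\Gamma=\Sigma\times(\Const(\varphi)\cup\{\#\})$ whose second track records the final merged partial word $\eta$, and constructs for each subformula $\zeta$ an $\MSO$-formula $\Phi_Y(\zeta)$ with one extra free set variable $Y$ such that $(\code(w,\eta))_\sigma\models\Phi_Y(\zeta)$ iff $\sigma(Y)\subseteq\dom(\eta)$ and $\llangle\zeta\rrangle(w_\sigma)=\eta|_{\sigma(Y)}$; compatibility then never needs to be ``localized'' because every subformula's value is by construction a restriction of the same global $\eta$. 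Second, and more seriously, your treatment of $\sqcap x$ and $\sqcap X$ as ``projections away from the bound variable'' is where the plan actually fails: projection implements $\sqcup$, not $\sqcap$. The quantifier $\sqcap\mathcal X$ is an infinitary merging over all (for $\sqcap X$, uncountably many) instantiations, so one cannot record ``the weight contribution of every subformula'' in the alphabet --- a quantified subformula has infinitely many contributions, one per instantiation. The paper's device is to express in $\MSO$ that $Y$ is the \emph{least} set into which the domain of every instantiation embeds, via $\xi(Y)=\forall\mathcal X.\exists Y'.(\Phi_{Y'}(\zeta')\wedge Y'\subseteq Y)$ conjoined with $\forall Z.(\xi(Z)\Rightarrow Y\subseteq Z)$. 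Without this (or an equivalent) minimality trick the inductive step for the merging quantifiers does not go through, and that is exactly the step your sketch defers rather than solves.
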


\subsection{Unambiguous Case: Definability Implies Recognizability}
\label{Subsect:Unamb_Case}

In this subsection, we prove part (a) of Theorem \ref{Theorem:Main}. First we show $\uWAL$-definability implies unambiguous recognizability.  We establish a decomposition of $\uWAL$-formulas in a similar manner as it was done for unambiguous WBA in Theorem  \ref{Theorem:Nivat} (a), i.e., we separate weighted part of $\uWAL$ from its unweighted part. Then applying the classical B\"uchi theorem and our Nivat Theorem \ref{Theorem:Nivat}(a), we obtain that $\mathbb L$ is recognizable over $\mathbb V$. 

\begin{lemma}
\label{Lemma:UnDefRec}
Let $\varphi \in \uWAL(\Sigma, \mathbb V_{\one})$ be a sentence. Then there exist an alphabet $\Gamma$, renamings $h: \Gamma \to \Sigma$ and $g: \Gamma \to M$, and a sentence $\beta \in \MSO(\Gamma)$ such that 
$[\![\varphi]\!] = h((\val \circ g) \cap \mathcal L(\beta))$.
\end{lemma}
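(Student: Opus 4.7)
\emph{Plan.} I will compile the weight assignments of $\varphi$ into an enriched alphabet, reducing the problem to a purely boolean MSO description. Let $\Omega = \Const(\varphi) \cup \{\one\}$, set $\Gamma = \Sigma \times \Omega$, and take $h \colon \Gamma \to \Sigma$ and $g \colon \Gamma \to M$ to be the two projections. The aim is a sentence $\beta \in \MSO(\Gamma)$ such that $u = ((a_i, m_i))_{i \in \N} \models \beta$ exactly when, writing $w = (a_i)_i$, one has $\llangle \varphi \rrangle(w) \neq \bot$ and $m_i = \llangle \varphi \rrangle(w)(i)$ whenever this is defined, $m_i = \one$ otherwise. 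For such $\beta$, each $w \in \Sigma^{\omega}$ with $\llangle \varphi \rrangle(w) \neq \bot$ has a unique $h$-preimage in $\mathcal L(\beta)$ whose $(\val \circ g)$-value coincides with $[\![\varphi]\!](w)$ by definition of the proper semantics, while if $\llangle \varphi \rrangle(w) = \bot$ the preimage is empty and both sides give $\zero$. This yields $[\![\varphi]\!] = h((\val \circ g) \cap \mathcal L(\beta))$; as a bonus $\mathcal L(\beta)$ is $h$-unambiguous, which is exactly what is needed later to invoke Theorem~\ref{Theorem:Nivat}(a).

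\emph{Constructing $\beta$.} I translate each subformula $\psi(\bar z)$ of $\varphi$ by structural induction into MSO formulas over $\Gamma$: a formula $\Phi_\psi^{\mathrm{s}}(\bar z)$ asserting $\llangle \psi \rrangle(w_\sigma) \neq \bot$, and, for each $m \in \Const(\psi) \cup \{*\}$, a formula $\Phi_\psi^{m}(\bar z, y)$ asserting that position $y$ carries weight $m$ in $\llangle \psi \rrangle(w_\sigma)$, with $m = *$ standing for ``undefined''. Every atom $P_a(x)$ occurring in these formulas is interpreted over $\Gamma$ as $\bigvee_{m \in \Omega} P_{(a,m)}(x)$. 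The atomic formulas $P_a(x), x=y, x<y, X(x)$ translate directly, with $\Phi^{*} \equiv \Phi^{\mathrm{s}}$ and $\Phi^{m} \equiv \false$ for $m \neq *$; for $x \mapsto m$, set $\Phi^{\mathrm{s}} = \true$, $\Phi^{m}(y) = (x = y)$, $\Phi^{*}(y) = \neg(x = y)$. For $\psi_1 \sqcap \psi_2$, $\Phi_\psi^{\mathrm{s}}$ conjoins $\Phi_{\psi_1}^{\mathrm{s}}$, $\Phi_{\psi_2}^{\mathrm{s}}$ with the compatibility clause $\forall y.\, \neg\bigvee_{m \neq m'} (\Phi_{\psi_1}^{m}(y) \wedge \Phi_{\psi_2}^{m'}(y))$, while $\Phi_\psi^{m} = \Phi_{\psi_1}^{m} \vee \Phi_{\psi_2}^{m}$ for $m \in \Const(\psi)$ and $\Phi_\psi^{*} = \Phi_{\psi_1}^{*} \wedge \Phi_{\psi_2}^{*}$; the merging quantifiers ${\sqcap} x$ and ${\sqcap} X$ are handled analogously with $\forall / \exists$ in place of $\wedge / \vee$ (with an analogous pairwise-compatibility clause over two copies of the bound variable). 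For $\psi_1 \Rightarrow \psi_2$, I introduce $\mathrm{empty}_{\psi_1} := \Phi_{\psi_1}^{\mathrm{s}} \wedge \forall y.\, \Phi_{\psi_1}^{*}(y)$ and case-split on it in accordance with Table~\ref{Table:Auxiliary_Semantics}.

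Finally, I set
\[
\beta \;=\; \Phi_\varphi^{\mathrm{s}} \;\wedge\; \forall x.\, \Bigl[\bigl(\Phi_\varphi^{*}(x) \Rightarrow \bigvee_{a \in \Sigma} P_{(a, \one)}(x)\bigr) \,\wedge\, \bigwedge_{m \in \Const(\varphi)} \bigl(\Phi_\varphi^{m}(x) \Rightarrow \bigvee_{a \in \Sigma} P_{(a, m)}(x)\bigr)\Bigr].
\]
The main obstacle is the correctness of the inductive translation: one must verify that whenever $\Phi_\psi^{\mathrm{s}}$ holds, the family $\{\Phi_\psi^{m}\}_{m \in \Const(\psi) \cup \{*\}}$ is \emph{functional} (exactly one value holds at each position) and faithfully mirrors the auxiliary semantics. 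The $\Rightarrow$-case in particular requires care, since its semantics discards the weighted content of the antecedent via the test $\llangle \psi_1 \rrangle = \top$, and this ``switch'' has to be recorded correctly in the $\Phi^{*}$ branch so that the full semantics collapses to $\top$ whenever the antecedent is not the empty partial word.
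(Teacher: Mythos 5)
Your proposal is correct, and at the top level it is the same decomposition as the paper's: enrich the alphabet with weight annotations, characterize by an MSO sentence exactly the encodings of pairs (input word, auxiliary semantics), and invoke the Nivat theorem. Where you genuinely diverge is inside the key technical lemma. The paper represents the partial $\omega$-word $\llangle \zeta \rrangle (w_{\sigma})$ by a single second-order variable $Y$ tracking its domain, proving the invariant that $(\code(w,\eta))_{\sigma} \models \Phi_Y(\zeta)$ iff $\sigma(Y) \subseteq \dom(\eta)$ and $\llangle \zeta \rrangle(w_{\sigma}) = \eta|_{\sigma(Y)}$; the price is that the merging quantifier ${\sqcap}\mathcal X$ forces one to express an infinite union of domains, which the paper handles with a minimality trick ($Y$ is the least set satisfying $\xi$). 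Your pointwise encoding by predicates $\Phi_{\psi}^{m}(y)$ with $m \in \Const(\psi) \cup \{*\}$ dissolves exactly that difficulty: the union of domains over all instances of the bound variable becomes a plain existential quantifier in front of $\Phi^{m}(y)$, and undefinedness a universal one. Since your $\Phi$-formulas read only the $\Sigma$-component of the enriched word, the weight component is constrained just once, in the final $\beta$, which is arguably cleaner than the paper's version where already $\Phi_Y(x \mapsto m)$ inspects the letter. Three points need care when writing this out. First, in the compatibility clauses the indices $m \neq m'$ must range over $\Const$ only, not over $\Const \cup \{*\}$: merging allows one operand to be defined where the other is not, and including $*$ would wrongly force equal domains. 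Second, the functionality of the family $\{\Phi_{\psi}^{m}\}_{m}$ (exactly one member holds at each position whenever $\Phi_{\psi}^{\mathrm s}$ holds) is not optional bookkeeping: if it failed, $\beta$ would either impose contradictory letter constraints (killing the unique preimage and yielding $\zero$ on the left) or leave the second component unconstrained (yielding a sum over several preimages); it does hold by a routine induction given your guards, so it should be stated and proved alongside the main invariant. Third, your use of $\one$ in place of a fresh marker $\#$ conflates an explicit assignment of $\one$ with undefinedness when $\one \in \Const(\varphi)$, but this is harmless: in both cases the letter is forced to $(a,\one)$, $g$ returns $\one$, and $h$-unambiguity is preserved.
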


The proof of this lemma will be given in the rest of this subsection.
Let $\# \notin M$ be a symbol which we will use to mark all positions whose labels are undefined in the auxiliary semantics of $\uWAL$-formulas. Let $\Delta_{\varphi} = \Const(\varphi) \cup \{\#\}$. Then our extended alphabet will be $\Gamma = \Sigma \times \Delta_{\varphi}$. We define the renamings $h, g$ as follows.
For all $u = (a, b) \in \Gamma$, we let $h(u) = a$, $g(u) = b$ if $b \in M$, and $g(u) = \one$ if $m = \#$.
The main difficulty is to construct the sentence $\beta$. For any $\omega$-word $w = (a_i)_{i \in \mathbb N} \in \Sigma^{\omega}$ and any partial $\omega$-word $\eta \in (\Const(\varphi))^{\uparrow}$, we encode the pair $(w, \eta)$ as the $\omega$-word $\code(w, \eta) = ((a_i, b_i))_{i \in \N} \in \Gamma^{\omega}$ where, for all $i \in \dom(\eta)$, $b_i = \eta(i)$ and, for all $i \in \N \setminus \dom(\eta)$, $b_i = \#$. In other words, we will consider $\omega$-words of $\Gamma$ as convolutions of $\omega$-words over $\Sigma$ with the encoding of the auxiliary semantics of $\varphi$.

The construction of $\beta$ is based on the following technical lemma.

\begin{lemma}
\label{Lemma:Aux}
For every subformula $\zeta$ of $\varphi$, there exists a formula ${\Phi(\zeta) \in \MSO(\Sigma \times \Delta_{\varphi})}$ such that $\Free(\Phi(\zeta)) = \Free(\zeta)$ and, for all $w_{\sigma} \in \Sigma^{\omega}_V$ and $\eta \in (\Const(\varphi))^{\uparrow}$, we have: $\llangle \zeta \rrangle(w_{\sigma}) = \eta$ iff $(\code(w, \eta))_{\sigma} \models \Phi(\zeta)$.
\end{lemma}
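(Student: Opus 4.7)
The plan is to prove Lemma~\ref{Lemma:Aux} by structural induction on $\zeta$, building $\Phi(\zeta)$ inside $\MSO(\Sigma\times\Delta_{\varphi})$. Throughout I use the shorthands $\bar P_{a}(z):=\bigvee_{b\in\Delta_{\varphi}}P_{(a,b)}(z)$ (``first component equals $a$'') and $\chi_{m}(z):=\bigvee_{a\in\Sigma}P_{(a,m)}(z)$ for $m\in\Delta_{\varphi}$ (``second component equals $m$''). For the atomic subformulas $P_{a}(x)$, $x=y$, $x<y$, $X(x)$, whose auxiliary semantics takes only the values $\top$ and $\bot$, I set $\Phi(\zeta)$ to be the standard $\MSO$ translation of the truth condition (with $\bar P_{a}$ in place of $P_{a}$) conjoined with $\forall z.\,\chi_{\#}(z)$, which forces the encoded $\eta$ to be $\top$. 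For $\zeta=(x\mapsto m)$ I put $\Phi(\zeta):=\chi_{m}(x)\wedge\forall z.(z=x\vee\chi_{\#}(z))$.

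For the compound cases, the key syntactic device is a substitution $\Phi\mapsto\Phi^{(Y)}$ that reroutes the ``second-component'' check onto fresh set variables $(Y_{m})_{m\in\Const(\varphi)}$: replace every atom $P_{(a,m)}(z)$ with $m\in\Const(\varphi)$ by $\bar P_{a}(z)\wedge Y_{m}(z)$, and every $P_{(a,\#)}(z)$ by $\bar P_{a}(z)\wedge\bigwedge_{m}\neg Y_{m}(z)$. By the inductive hypothesis, $\Phi^{(Y)}(\zeta')$ then asserts that $\llangle\zeta'\rrangle(w_{\sigma})$ equals the partial $\omega$-word encoded by the $(Y_{m})$'s instead of by the actual second component of the input. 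Using this I would set
$$
\Phi(\zeta_{1}\Rightarrow\zeta_{2}):=\bigl(\Phi^{(\emptyset)}(\zeta_{1})\wedge\Phi(\zeta_{2})\bigr)\vee\bigl(\neg\Phi^{(\emptyset)}(\zeta_{1})\wedge\forall z.\,\chi_{\#}(z)\bigr),
$$
splitting on whether $\llangle\zeta_{1}\rrangle(w_{\sigma})=\top$, and I would realise $\zeta_{1}\sqcap\zeta_{2}$ by existentially quantifying two fresh families $(Y^{1}_{m})$, $(Y^{2}_{m})$ (each with pairwise-disjoint components, so that it encodes a partial $\omega$-word), requiring $\Phi^{(Y^{1})}(\zeta_{1})\wedge\Phi^{(Y^{2})}(\zeta_{2})$ together with cross-compatibility and the merge condition $\chi_{m}(z)\Leftrightarrow(Y^{1}_{m}(z)\vee Y^{2}_{m}(z))$ for every $m\in\Const(\varphi)$.

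For the merging quantifiers $\sqcap x.\zeta'$ and $\sqcap X.\zeta'$ it is pleasantly unnecessary to introduce a pairing bijection: the ambient $\MSO$ quantifier over $x$ (resp.\ $X$), together with an inner $\exists (Y_{m})$ for each instance, already suffices. Concretely, $\Phi(\sqcap x.\zeta')$ would be the conjunction of (i) a \emph{totality} clause $\forall x.\exists (Y_{m}).\Phi^{(Y)}(\zeta')$ ensuring no $\eta_{i}$ equals $\bot$; (ii) a \emph{pairwise-compatibility} clause $\forall x,x'.\forall (Y_{m}),(Y'_{m}).\bigl(\Phi^{(Y)}(\zeta')\wedge\Phi^{(Y')}(\zeta')[x/x']\bigr)\to\forall z.\bigwedge_{m\neq m'}\neg(Y_{m}(z)\wedge Y'_{m'}(z))$; and (iii) a \emph{merge-matches-input} clause $\forall z.\bigwedge_{m\in\Const(\varphi)}\bigl(\chi_{m}(z)\Leftrightarrow\exists x.\exists (Y_{m}).\Phi^{(Y)}(\zeta')\wedge Y_{m}(z)\bigr)$; the case of $\sqcap X.\zeta'$ is identical with $x$ replaced by a second-order variable. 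The main obstacle I foresee is not any single case but the disciplined bookkeeping required to thread the substitution $\Phi^{(Y)}$ through the induction and to verify $\Free(\Phi(\zeta))=\Free(\zeta)$ at each step; once that is settled the construction proceeds uniformly, and the equivalence $\llangle\zeta\rrangle(w_{\sigma})=\eta \Leftrightarrow (\code(w,\eta))_{\sigma}\models\Phi(\zeta)$ follows by a routine case analysis against Table~\ref{Table:Auxiliary_Semantics}.
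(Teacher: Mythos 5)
Your construction is correct, but it executes the induction with a different key device than the paper. The paper's proof introduces a single fresh set variable $Y$ tracking only the \emph{domain} of the sub-semantics and proves the invariant ``$(\code(w,\eta))_{\sigma}\models\Phi_Y(\zeta)$ iff $\sigma(Y)\subseteq\dom(\eta)$ and $\llangle\zeta\rrangle(w_{\sigma})=\eta|_{\sigma(Y)}$'', i.e.\ every sub-semantics is read off as a \emph{restriction of the one global} second component; the infinitary merge for ${\sqcap}x.\zeta'$ is then captured by requiring $Y$ to be the least set satisfying $\xi(Y)=\forall x.\exists Y'.(\Phi_{Y'}(\zeta')\wedge Y'\subseteq Y)$, and a final wrapper forces $\#$ outside $Y$. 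You instead re-encode each sub-semantics \emph{in full}, independently of the input's second component, via a family $(Y_m)_{m\in\Const(\varphi)}$ and the relativization $\Phi\mapsto\Phi^{(Y)}$, and then state the merge conditions directly (your pointwise clause (iii) replaces the paper's minimality trick, and your $\Phi^{(\emptyset)}(\zeta_1)$ plays exactly the role of the paper's test $\kappa$ for $\llangle\zeta_1\rrangle=\top$). The trade-off: the paper's restriction invariant uses one auxiliary variable per level and needs no substitution machinery, but pays with the slightly delicate least-set argument; your relativization makes compatibility and union conditions completely explicit and pointwise, at the cost of $|\Const(\varphi)|$ fresh variables per level and an auxiliary interpretation lemma stating that $\Phi^{(Y)}(\zeta')$ holds under an assignment of pairwise-disjoint sets $(R_m)$ iff $\Phi(\zeta')$ holds on $\code(w,\eta')$ for the partial word $\eta'$ they encode. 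That interpretation lemma, together with building the disjointness of $(Y_m)$ into the relativized invariant, is the one piece you must carry through the induction explicitly (you flag it yourself); with it in place your case analysis matches Table~\ref{Table:Auxiliary_Semantics} and yields the same conclusion, and the subsequent use in Lemma~\ref{Lemma:UnDefRec} is unaffected.
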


Note that $\llangle \varphi \rrangle (w_{\sigma}) = \eta$ means in particular that $\llangle \varphi \rrangle (w_{\sigma}) \neq \bot$.

\begin{proof}
Let $Y \in V_2$ be a fresh variable which does not occur in $\varphi$.  First, we define inductively the formula ${\Phi_Y(\zeta) \in \MSO(\Gamma)}$ with $\Free(\Phi_{Y}(\zeta)) = \Free(\zeta) \cup \{Y\}$ which describes the connection between the input $\omega$-word $w$ and the output partial $\omega$-word $\eta$; here the variable $Y$ keeps track of the domain of $\eta$.
\begin{itemize}
\item For $\zeta = P_a(x)$, we let 
$$\Phi_Y(\zeta) = \bigvee_{b \in \Delta_{\varphi}} P_{(a,b)}(x) \wedge Y(\emptyset)$$ where $Y(\emptyset)$ is abbreviation for $\forall y. \lnot Y(y)$. Here we demand that the first component of the letter at position $x$ is $a$ and the second component is an arbitrary letter from $\Delta_{\varphi}$ and that the auxiliary semantics of $\zeta$ is the trivial partial $\omega$-word $\top$.
\item Let $\zeta$ be one of the formulas of the form $x = y$, $x < y$ or $X(x)$. Then, we let $$\Phi_Y(\zeta) =  \zeta \wedge Y(\emptyset).$$
\item For $\zeta = (x \mapsto m)$, we let $$\Phi_{Y}(\zeta) = \bigvee_{a \in \Sigma} P_{(a,m)}(x) \wedge \forall y. (Y(y) \Leftrightarrow x = y).$$ This formula describes that position $x$ of $\eta$ must be labelled by $m$ and all other positions are unlabelled. 
\item Let $\zeta = (\zeta_1 \Rightarrow \zeta_2)$. Let $Z \in V_2$ be a fresh variable. Consider the formula ${\kappa = \exists Z. [\Phi_Z(\zeta_1) \wedge Z(\emptyset)]}$ which checks whether the value of the auxiliary semantics of $\zeta_1$ is $\top$. Then, we let 
$$
\Phi_Y(\zeta) = (\kappa \wedge \Phi_Y(\zeta_2)) \vee (\lnot \kappa \wedge Y(\emptyset)).
$$
\item Let $\zeta = \zeta_1 \sqcap \zeta_2$. Let $Y_1, Y_2 \in V_2$ be two fresh distinct variables. Then we let $$\Phi_Y(\zeta) = \exists Y_1. \exists Y_2. (\Phi_{Y_1}(\zeta_1) \wedge \Phi_{Y_2}(\zeta_2) \wedge [Y = Y_1 \cup Y_2]).$$ Here $Y = Y_1 \cup Y_2$ is considered as abbreviation for the MSO-formula ${\forall y. (Y(y) \Leftrightarrow [Y_1(y) \vee Y_2(y)])}$.
\item The most interesting case is a formula of the form $\zeta = {\sqcap} \mathcal X. \zeta'$ with $\mathcal X \in V$. Here, every value of $\mathcal X$ induces its own value of $Y(\mathcal X)$ and we have to merge infinitely many partial $\omega$-words, i.e., to express that $Y$ is the infinite union of $Y(\mathcal X)$ over all sets $\mathcal X$. We can show that $Y$ must be the minimal set which satisfies the formula $\xi(Y) = \forall \mathcal X. \exists Y'. (\Phi_{Y'}(\zeta') \wedge (Y' \subseteq Y))$ where $Y' \in V_2$ is a fresh variable. Then, we let 
$$\Phi_Y(\zeta) = \xi(Y) \wedge \forall Z. (\xi(Z) \Rightarrow (Y \subseteq Z))$$ where $Z \in V_2$ is a fresh variable.
\end{itemize}

Let $w = (a_i)_{i \in \mathbb N} \in \Sigma^{\omega}$, $\sigma$ be a $w$-assignment and $\eta \in (\Const(\varphi))^{\uparrow}$. For $R \subseteq \mathbb N$, let ${\eta|_{R}  \in (\Const(\varphi))^{\uparrow}}$ be defined as $\dom(\eta|_R) = R \cap \dom(\eta)$ and $\eta|_R(i) = \eta(i)$ for all $i \in \dom(\eta|_R)$. Now we show by induction on the structure of $\zeta$ that
\begin{equation}
\label{Eq:Techno}
(\code(w, \eta))_{\sigma} \models \Phi_Y(\zeta) \; \text{ iff } \; \sigma(Y) \subseteq \dom(\eta) \text{ and } \llangle \zeta \rrangle (w_{\sigma}) = \eta|_{\sigma(Y)}.
\end{equation}

\begin{itemize}
\item Let $\zeta = P_a(x)$.
\begin{itemize}
\item Assume that $(\code(w, \eta))_{\sigma} \models \Phi_Y(\zeta)$. Then $a_{\sigma(x)} = a$ and $\sigma(Y) = \emptyset$. Hence $\llangle \zeta \rrangle(w_{\sigma}) = \top = \eta|_{\emptyset}$ and $\emptyset = \sigma(Y) \subseteq \dom(\eta)$.
\item Conversely, assume that $\sigma(Y) \subseteq \dom(\eta)$ and $\llangle \zeta \rrangle (w_{\sigma}) = \eta|_{\sigma(Y)}$. Then $\llangle \zeta \rrangle (w_{\sigma}) = \top$ which implies $a_{\sigma(x)} = a$ and $\sigma(Y) = \emptyset$. Then $(\code(w, \eta))_{\sigma} \models \Phi_Y(\zeta)$.
\end{itemize}
\item Let $\zeta$ be one of the formulas $x < y$, $x = y$ and $X(x)$.
\begin{itemize}
\item Assume that $(\code(w, \eta))_{\sigma} \models \Phi_Y(\zeta)$. Then $(\code(w, \eta))_{\sigma} \models \zeta$ and ${\sigma(Y) = \emptyset}$. Since $(\code(w, \eta))_{\sigma} \models \zeta$ implies $w_{\sigma} \models \zeta$, we obtain $\llangle \zeta \rrangle (w_{\sigma}) = \top = \eta|_{\emptyset}$ and $\emptyset = \sigma(Y) \subseteq \dom(\eta)$.
\item Conversely, assume that $\sigma(Y) \subseteq \dom(\eta)$ and $\llangle \zeta \rrangle (w_{\sigma}) = \eta|_{\sigma(Y)}$. Then $\llangle \zeta \rrangle (w_{\sigma}) = \top$ which implies $w_{\sigma} \models \zeta$ and $\sigma(Y) = \emptyset$.  Then, ${(\code(w, \eta))_{\sigma} \models \zeta}$ and $\sigma(Y) = \emptyset$. Hence $(\code(w, \eta))_{\sigma} \models \Phi_Y(\zeta)$. 
\end{itemize}
\item Let $\zeta = (x \mapsto m)$ with $m \in \Const(\varphi)$ (since $\zeta$ is a subformula of $\varphi$).
\begin{itemize}
\item Assume that $(\code(w, \eta))_{\sigma} \models \Phi_Y(\zeta)$. Then $\sigma(x) \in \dom(\eta)$, $\eta(\sigma(x)) = m$ and $\sigma(Y) = \{\sigma(x)\}$. Hence $\llangle \psi \rrangle (w_{\sigma}) = \top[\sigma(x)/m] = \eta|_{\sigma(Y)}$ and $\{\sigma(x)\} = Y \subseteq \dom(\eta)$.
\item Conversely, assume that the right hand side of (\ref{Eq:Techno}) holds true. Then $\eta|_{\sigma(Y)} = \top[\sigma(x)/m]$. Since $\sigma(Y) \subseteq \dom(\eta)$, we have $\sigma(Y) = \{\sigma(x)\}$. Moreover, $\eta(\sigma(x)) = m$. Then the left hand side of (\ref{Eq:Techno}) also holds true.
\end{itemize}
\item Let $\zeta = (\zeta_1 \Rightarrow \zeta_2)$.
\begin{itemize}
\item Assume that the left hand side of (\ref{Eq:Techno}) holds true. Then one of the following cases is possible.
\begin{itemize}
\item $(\code(w, \eta))_{\sigma} \models \kappa \wedge \Phi_Y(\zeta_2)$. Then, $(\code(w, \eta))_{\sigma[Y/\emptyset]} \models \Phi_Y(\zeta_1)$ and $(\code(w, \eta))_{\sigma} \models \Phi_Y(\zeta_2)$. Then by induction hypothesis for $\zeta_1$ and $\zeta_2$ we have: $\llangle \zeta_1 \rrangle (w_{\sigma}) = \eta|_{\emptyset} = \top$, $\sigma(Y) \subseteq \dom(\eta)$ and $\llangle \zeta_2 \rrangle (w_{\sigma}) = \eta|_{\sigma(Y)}$. This implies $\sigma(Y) \subseteq \dom(\eta)$ and $\llangle \zeta \rrangle (w_{\sigma}) = \llangle \zeta_2 \rrangle (w_{\sigma}) = \eta|_{\sigma(Y)}$. Hence the right hand side of (\ref{Eq:Techno}) holds true. 
\item $(\code(w, \eta))_{\sigma} \models \lnot \kappa \wedge Y(\emptyset)$. Then, $(\code(w, \eta))_{\sigma[Y/\emptyset]} \nvDash \Phi_Y(\zeta_1)$ and $\sigma(Y) = \emptyset$. Then by induction hypothesis for $\zeta_1$ we have $\llangle \eta_1 \rrangle (w_{\sigma}) \neq \eta|_{\emptyset} = \top$. Then $\emptyset = \sigma(Y) \subseteq \dom(\eta)$ and $\llangle \zeta \rrangle = \top = \eta|_{\sigma(Y)}$. Then the right hand side of (\ref{Eq:Techno}) holds true. 
\end{itemize}
\item Now assume that the right hand side of (\ref{Eq:Techno}) holds true. Then one of the following cases is possible.
\begin{itemize}
\item $\llangle \zeta_1 \rrangle (w_{\sigma}) = \top = \eta|_{\emptyset}$. Then by induction hypothesis for $\zeta_1$ we have $(\code(w, \eta))_{\sigma[Y/\emptyset]} \models \Phi_Y(\zeta_1)$ and hence $(\code(w, \eta))_{\sigma} \models \kappa$. Moreover, $\eta|_{\sigma(Y)} = \llangle \zeta \rrangle (w_{\sigma}) = \llangle \zeta_2 \rrangle (w_{\sigma})$ and $\sigma(Y) \subseteq \dom(\eta)$. Then by induction hypothesis for $\zeta_2$ we obtain $(\code(w, \eta))_{\sigma} \models \Phi_Y(\zeta_1)$. Then we have $(\code(w, \eta))_{\sigma} \models \kappa \wedge \Phi_Y(\zeta_1)$ and hence $(\code(w, \eta))_{\sigma} \models \Phi_Y(\zeta)$.
\item $\llangle \zeta_1 \rrangle (w_{\sigma}) \neq \top = \eta|_{\emptyset}$. Then by induction hypothesis for $\zeta_1$ we have $(\code(w, \eta))_{\sigma[Y/\emptyset]} \nvDash \Phi_Y(\zeta_1)$ and hence $(\code(w, \eta))_{\sigma} \nvDash \kappa$. Moreover, $\eta|_{\sigma(Y)} = \llangle \zeta \rrangle (w_{\sigma}) = \top = \eta|_{\emptyset}$ and $\sigma(Y) \subseteq \dom(\eta)$ which implies $\sigma(Y) = \emptyset$. Then $(\code(w, \eta))_{\sigma} \models \lnot \kappa \wedge Y(\emptyset)$ and hence $(\code(w, \eta))_{\sigma} \models \Phi_Y(\zeta)$.
\end{itemize}
\end{itemize}
\item Let $\zeta = \zeta_1 \sqcap \zeta_2$. 
\begin{itemize}
\item Assume that the left hand side of (\ref{Eq:Techno}) holds. Then there exist subsets $R_1, R_2 \subseteq \dom(w)$ such that:
\begin{itemize}
\item $\sigma(Y) = R_1 \cup R_2$,
\item $(\code(w, \eta))_{\sigma[Y/R_1]} \models \Phi_{Y}(\zeta_1)$,
\item $(\code(w, \eta))_{\sigma[Y/R_2]} \models \Phi_{Y}(\zeta_2)$.
\end{itemize}
Then by induction hypothesis for $\zeta_1$ and $\zeta_2$ we have:
\begin{itemize}
\item $R_1 \subseteq \dom(\eta)$ and $\llangle \zeta_1 \rrangle (w_{\sigma}) = \eta|_{R_1}$,
\item $R_2 \subseteq \dom(\eta)$ and $\llangle \zeta_2 \rrangle (w_{\sigma}) = \eta|_{R_2}$.
\end{itemize}
Then $\sigma(Y) \subseteq \dom(\eta)$ and, since $\eta|_{R_1}$ and $\eta|_{R_2}$ are compatible partial $\omega$-words, we have $\llangle \zeta_1 \sqcap \zeta_2 \rrangle (w_{\sigma} = \eta|_{R_1} \sqcap \eta|_{R_2} = \eta|_{\sigma(Y)}$. This shows that the right hand side of (\ref{Eq:Techno}) also holds true.
\item Conversely, assume that the right hand side of (\ref{Eq:Techno}) holds. Let $\eta_1 = \llangle \zeta_1 \rrangle (w_{\sigma})$ and $\eta_2 = \llangle \zeta_2 \rrangle (w_{\sigma})$. Then $\eta|_{\sigma(Y)} = \sigma_1 \sqcap \sigma_2$. Moreover, there exist $R_1, R_2 \subseteq \dom(w)$ such that:
\begin{itemize}
\item $R_1 \cup R_2 = \sigma(Y)$,
\item $\eta_1 = \eta|_{R_1}$ and $\eta = \eta|_{R_2}$.
\end{itemize}
Since $R_1, R_2 \subseteq \sigma(Y) \subseteq \dom(\eta)$, by induction hypothesis we have $(\code(w, \eta))_{\sigma[Y_i/R_i]} \models \Phi_Y(\zeta_i)$ for $i \in \{1,2\}$. Since $Y_2$ does not occur in $\Phi_Y(\zeta_1)$ and $Y_1$ does not occur in $\Phi_Y(\zeta_2)$, we have $\code(w, \eta)_{\sigma[Y_1/R_1][Y_2/R_2]} \models \Phi_Y(\zeta_i)$ for $i \in \{1,2\}$. Then the left hand side of (\ref{Eq:Techno}) holds. 
\end{itemize}
\item Let $\zeta = {\sqcap} x. \zeta'$ with $x \in V_1$.
\begin{itemize}
\item Assume that $(\code(w, \eta))_{\sigma} \models \Phi_Y(\zeta)$. Then $(\code(w, \eta))_{\sigma} \models \xi(Y)$. This means that for all $i \in \dom(w)$ there exists a subset $R_i \subseteq \sigma(Y)$ such that $(\code(w, \eta))_{\sigma[x/i][Y'/R_i]} \models \Phi_{Y}(\zeta')$. Then by induction hypothesis for all ${i \in \dom(w)}$ we have: $R_i \subseteq \dom(\eta)$ and $\llangle \zeta' \rrangle (w_{\sigma[x/i]}) = \eta|_{R_i}$. Let $R = \bigcup_{i \in \dom(w)} R_i$. Then, $(\code(w, \eta))_{\sigma[Z/R]} \models \xi(Z)$. Since $(\code(w, \eta))_{\sigma} \models \forall Z. (\xi(Z) \Rightarrow (Y \subseteq Z))$, we obtain $\sigma(Y) \subseteq R$. Hence $R = \sigma(Y)$ and
$$
\llangle \zeta \rrangle (w_{\sigma}) = \bigsqcap_{i \in \dom(w)} \eta|_{R_i} = \eta|_R = \eta|_{\sigma(Y)}.
$$
Finally, $\sigma(Y) = \bigcup_{i \in \dom(w)} R_i \subseteq \dom(\eta)$. This shows that the right hand side of (\ref{Eq:Techno}) holds true.
\item Conversely, assume that the right hand side of (\ref{Eq:Techno}) holds. Then there exists a family $(R_i)_{i \in \dom(w)}$ of subsets $R_i \subseteq \dom(Y) \subseteq \dom(\eta)$ such that $\bigcup_{i \in \dom(w)} R_i = \sigma(Y)$ and, for all $i \in \dom(w)$, $\llangle \zeta' \rrangle (w_{\sigma[x/i]}) = \eta|_{R_i}$. Then it is easy to see by induction hypothesis that, for all $i \in \dom(w)$, $(\code(w, \eta))_{[x/i][Y'/R_i]} \models \Phi_{Y'}(\zeta')$. Then $(\code(w, \eta))_{\sigma} \models \xi(Y)$. It remains to show that
$$
(\code(w, \sigma))_{\sigma} \models \forall Z. (\xi(Z) \Rightarrow (Y \subseteq Z)).
$$
Indeed, let $Q \subseteq \dom(w)$ with $(\code(w, \eta))_{\sigma[Z/Q]} \models \xi(Z)$. Then for all $i \in \dom(w)$ there exists a subset $Q_i \subseteq Q$ with $(\code(w, \eta))_{\sigma[x/i][Y'/Q_i]} \models \Phi_{Y'}(\zeta')$. Then by induction hypothesis for all $i \in \dom(w)$ we have ${Q_i \subseteq \dom(\eta)}$ and
$$
\eta|_{Q_i} = \llangle \zeta' \rrangle (w_{\sigma[x/i]}) = \eta|_{R_i}.
$$
Hence $Q_i = R_i$ for all $i \in \dom(w)$, and
$$
\sigma(Y) = \bigcup_{i \in \dom(w)} R_i = \bigcup_{i \in \dom(w)} Q_i \subseteq Q.
$$
\end{itemize}
\item The proof for $\zeta = {\sqcap} X. \zeta'$ with $X \in V_2$ is completely analogous to the proof of the previous case. The difference is that we consider "for all $I \subseteq \dom(w)$" instead of "for all $i \in \dom(w)$".
\end{itemize}

Finally, we construct $\Phi(\zeta)$ from $\Phi_Y(\zeta)$ by labelling all positions not in $Y$ by $\#$:
$${\Phi(\zeta) = \exists Y. (\Phi_Y(\zeta) \wedge \forall x. (Y(x) \vee \bigvee_{a \in \Sigma} P_{(a, \#)}(x)))}.$$
Assume that $\llangle \zeta \rrangle (w_{\sigma}) = \eta$. Let $R = \dom(\eta)$ and consider $\sigma' = \sigma[Y/R]$. Then $\sigma'(Y) \subseteq \dom(\eta)$ and $\llangle \zeta \rrangle (w_{\sigma}) = \eta|_{\sigma(Y)}$. Then by (\ref{Eq:Techno}) we have $(\code(w, \eta))_{\sigma'} \models \Phi_Y(\zeta)$. Moreover, for all $i \in \dom(w) \setminus \sigma'(Y)$, the value $\eta(i)$ is undefined and hence $(\code(w, \eta))_{\sigma'} \models \forall x. (Y(x) \vee \bigvee_{a \in \Sigma} P_{(a,\#)}(x))$ which implies $(\code(w, \eta))_{\sigma} \models \Phi(\zeta)$.
\qed
\end{proof}
Now we continue the proof of Lemma \ref{Lemma:UnDefRec}.  We apply Lemma \ref{Lemma:Aux} to the case ${\zeta = \varphi}$. Then, $\Phi(\varphi)$ is a sentence and $\mathcal L(\Phi(\varphi)) = \{\code(w, \eta) \; | \; \llangle \varphi \rrangle(w) = \eta \neq \bot \}$. Note that $\mathcal L(\Phi(\varphi))$ is $h$-unambiguous, since for every $w \in \Sigma^{\omega}$ there exists at most one $u \in \mathcal L(\Phi(\varphi))$ with $h(u) = w$. If we let $\beta = \Phi(\varphi)$, then we obtain the desired decomposition ${[\![\varphi]\!] = h((\val \circ g) \cap \mathcal L(\beta))}$.  Indeed, let $w \in \Sigma^{\omega}$. Then we distinguish between the following two cases:
\begin{itemize}
\item $\llangle \varphi \rrangle (w) = \bot$. Then $[\![\varphi]\!](w) = \zero$. On the other side, there exists no $\eta$ with $\code(w, \eta) \in \mathcal L(\beta)$ and hence no $u \in \mathcal L(\beta)$ with $h(u) = w$. Then ${h((\val \circ g) \cap \mathcal L(\beta))(w) = \zero = [\![\varphi]\!](w)}$.
\item $\llangle \varphi \rrangle (w) \in M^{\uparrow}$. Then, since the mapping $g$ assigns the default weight $\one$ to the undefined positions of $\llangle \varphi \rrangle (w) \in M^{\uparrow}$ and $\mathcal L(\beta)$ is $h$-unambiguous, we also have ${h((\val \circ g) \cap \mathcal L(\beta))(w) = [\![\varphi]\!](w)}$.

\end{itemize}

This finishes the proof of Lemma \ref{Lemma:UnDefRec}. Hence $\uWAL$-definability implies unambiguous recognizability.

\subsection{Unambiguous Case: Recognizability Implies Definability}

Now we show the converse part of Theorem \ref{Theorem:Main}(a), i.e., we show that unambiguous recognizability implies $\uWAL$-definability. 

\begin{lemma}
\label{Lemma:UnRecDef}
Let $\A$ be an unambiguous WBA over $\Sigma$ and $\mathbb V$. Then, the quantitative $\omega$-language $[\![\A]\!]$ if $\uWAL$-definable over $\mathbb V$.
\end{lemma}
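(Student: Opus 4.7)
The plan is to combine the Nivat decomposition for unambiguous WBA (Theorem~\ref{Theorem:Nivat}(a)), B\"uchi's theorem, and the embedding $W(\cdot)$ of $\MSO$ into $\uWAL$ established in the previous subsection.

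First, I apply Theorem~\ref{Theorem:Nivat}(a) to $\A$ to obtain an alphabet $\Gamma$, renamings $h : \Gamma \to \Sigma$ and $g : \Gamma \to M$, and a recognizable, $h$-unambiguous $\omega$-language $\mathcal L \subseteq \Gamma^{\omega}$ with $[\![\A]\!] = h((\val \circ g) \cap \mathcal L)$. By B\"uchi's theorem, $\mathcal L = \mathcal L(\beta_0)$ for some $\beta_0 \in \MSO(\Gamma)$. Encoding each $u \in \Gamma^{\omega}$ by the pair consisting of $h(u) \in \Sigma^{\omega}$ and the partition $(X_{\gamma})_{\gamma \in \Gamma}$ of $\mathbb N$ given by the $\gamma$-labeled positions, I rewrite $\beta_0$ as an MSO formula $\tilde\beta$ over $\Sigma$ with free second-order variables $(X_{\gamma})_{\gamma \in \Gamma}$, asserting that the $X_{\gamma}$ partition $\mathbb N$, that $P_{h(\gamma)}(i)$ holds for every $i \in X_{\gamma}$, and that the induced $\Gamma$-word lies in $\mathcal L$.

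Next, using finite existential second-order quantification, I define the $\MSO(\Sigma)$ formulas $\alpha := \exists (X_{\gamma})_{\gamma \in \Gamma}.\, \tilde\beta$ and, for each $\gamma \in \Gamma$, $\psi_{\gamma}(x) := \exists (X_{\gamma'})_{\gamma' \in \Gamma}.\, (\tilde\beta \wedge X_{\gamma}(x))$. Since $\mathcal L$ is $h$-unambiguous, for every $w \in \Sigma^{\omega}$ there is at most one tuple $(X_{\gamma})_{\gamma}$ satisfying $\tilde\beta$, namely the one induced by the unique $h$-preimage $u \in \mathcal L$ of $w$ (when it exists). Hence $w \models \alpha$ iff $w$ has such a preimage, and $w_{\sigma} \models \psi_{\gamma}(x)$ iff $u_{\sigma(x)} = \gamma$ in this unique $u$. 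I then fix any default weight $\one \in M$ and propose the $\uWAL(\Sigma, \mathbb V_{\one})$ sentence
\[
\varphi \;:=\; W(\alpha) \,\sqcap\, \sqcap x.\, \bigsqcap_{\gamma \in \Gamma}\bigl( W(\psi_{\gamma}(x)) \Rightarrow (x \mapsto g(\gamma))\bigr).
\]

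To verify $[\![\varphi]\!] = [\![\A]\!]$, I split into cases on $w \in \Sigma^{\omega}$. If $w$ has no preimage in $\mathcal L$, then $\llangle W(\alpha) \rrangle(w) = \bot$; since $\bot$ absorbs every merging, the auxiliary semantics of $\varphi$ is $\bot$ and $[\![\varphi]\!](w) = \zero = [\![\A]\!](w)$. If $w$ has the unique preimage $u$, then $\llangle W(\alpha) \rrangle(w) = \top$ and, at every position $i$, the disjunct indexed by $\gamma = u_i$ contributes $\top[i/g(u_i)]$ while every other $\gamma$ contributes $\top$ (because its antecedent evaluates to $\bot$). Merging over $\gamma$ and then over $i$ produces the total $\omega$-word $(g(u_i))_{i \in \mathbb N}$, whose $\val$-value is exactly $[\![\A]\!](w)$ by construction of $g$.

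The main obstacle is the ``no preimage'' case: without the $W(\alpha)$ conjunct, every inner antecedent $W(\psi_{\gamma}(x))$ would be false at every position, the inner merging would collapse to $\top$, and the proper semantics would assign the default weight $\one$ to every position and return $\val(\one^{\omega})$, which is not $\zero$ in general. The $W(\alpha)$ conjunct is precisely the device that propagates $\bot$ whenever $w \notin h(\mathcal L)$, while the $h$-unambiguity of $\mathcal L$ guarantees that, when a preimage does exist, the various $\top[i/g(u_i)]$ produced by the inner merging are pairwise compatible and hence the weight assignment is unambiguous.
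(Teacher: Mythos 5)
Your proposal is correct, and it reaches the same goal by a recognizably different route. The paper's proof of Lemma~\ref{Lemma:UnRecDef} works directly on the automaton: it fixes an enumeration $t_1,\dots,t_m$ of the transitions, writes an $\MSO(\Sigma)$ formula $\beta$ with free variables $X_1,\dots,X_m$ describing the accepting runs, and builds
$W(\exists X_1\dots\exists X_m.\beta)\sqcap\bigl({\sqcap}X_1\dots{\sqcap}X_m.[W(\beta)\Rightarrow{\sqcap}x.\bigsqcap_i X_i(x)\Rightarrow(x\mapsto\wt(t_i))]\bigr)$,
relying on the merging over \emph{all} tuples $(X_1,\dots,X_m)$ together with the inner guard $W(\beta)$, which sends every non-run tuple to $\top$; unambiguity ensures exactly one tuple contributes a nontrivial partial $\omega$-word. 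You instead factor through Theorem~\ref{Theorem:Nivat}(a) and B\"uchi's theorem, and then exploit $h$-unambiguity more aggressively: because the witnessing tuple $(X_\gamma)_{\gamma\in\Gamma}$ is unique when it exists, the letter of the preimage at position $x$ becomes $\MSO$-definable by $\psi_\gamma(x)$ with an \emph{existential} second-order prefix, so your final sentence needs only the first-order merging ${\sqcap}x$ and a finite merging over $\Gamma$, with no second-order merging quantifiers at all. The two formulas share the same essential devices (the outer $W(\alpha)$ guard to force $\bot$ on words with no accepting run/preimage, and guarded per-position assignments $x\mapsto g(\gamma)$ resp.\ $x\mapsto\wt(t_i)$), and since the proof of Lemma~\ref{Lemma:3.2} takes $\Gamma=T$, your $\mathcal L$ is essentially the run language anyway. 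What your version buys is modularity (reusing the Nivat decomposition already proved) and a syntactically leaner target formula; what the paper's version buys is independence from Theorem~\ref{Theorem:Nivat}(a) in this direction and a construction that generalizes verbatim to the nondeterministic case by replacing ${\sqcap}X_i$ with ${\sqcup}X_i$, which is exactly how the paper proves the converse of Theorem~\ref{Theorem:Main}(b). Your verification of both cases ($w\notin h(\mathcal L)$ and $w\in h(\mathcal L)$), including the observation that the $W(\alpha)$ conjunct is indispensable because otherwise the default weight would yield $\val(\one^\omega)$ instead of $\zero$, is accurate.
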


\begin{proof}

Let $\A = (Q, I, T, F, \wt)$ be an unambiguous WBA over $\Sigma$ and $\mathbb V$. First, using the standard approach, we describe runs of $\A$ by means of MSO-formulas. For this, we fix an enumeration $(t_i)_{1 \le i \le m}$ of $T$ and associate with every transition $t_i$ a second-order variable $X_i$ which keeps track of positions where $t$ is taken. Then, a run of $\A$ can be described using a formula $\beta \in \MSO(\Sigma)$ with $\Free(\beta) = \{X_1, ..., X_m\}$ which demands that values of the variables $X_1, ..., X_m$ form a partition of the domain of an input word, the transitions of a run are matching, the labels of transitions of a run are compatible with an input word, a run starts in $I$ and visits some state in $F$ infinitely often. Let $\one \in M$ be an arbitrary default weight. Consider the  $\uWAL(\Sigma, \mathbb V_{\one})$-sentence
$$
\varphi = W(\exists X_1...  \exists X_m. \beta) \sqcap \big({\sqcap} X_1... {\sqcap} X_m. [W(\beta) \Rightarrow {\sqcap} x. {\textstyle \bigsqcap_{i = 1}^m} X_i(x) \Rightarrow (x \mapsto \wt(t_i))]\big).
$$
Now we show that $[\![\varphi]\!] = [\![\A]\!]$. Let $w \in \Sigma^{\omega}$.
We distinguish between the following two cases.
\begin{itemize}
\item $\Run_{\A}(w) = \emptyset$. Then $[\![\A]\!](w) = \zero$. On the other side, $w \nvDash \exists X_1... \exists X_m. \beta$ which implies $\llangle W(\exists X_1...\exists X_m. \beta) \rrangle(w) = \bot$. Then $\llangle \varphi \rrangle (w) = \bot$ and hence ${[\![\varphi]\!](w) = \zero = [\![\A]\!](w)}$.
\item $\Run_{\A}(w) \neq \emptyset$. Since $\A$ is unambiguous, we have $\Run_{\A}(w) = \{\rho\}$. Let $\rho = (\tau_i)_{i \in \N}$ and $\sigma$ be a fixed $w$-assignment. Then, there exists exactly one tuple $(I_1, ..., I_m) \in (2^{\dom(w)})^m$ such that 
$w_{\sigma[X_1/I_1]...[X_m/I_m]} \models \beta$. Then $\llangle W(\exists X_1...\exists X_m.\beta) \rrangle(w) = \top$. Moreover,
$$\llangle W(\beta) \Rightarrow {\sqcap} x. {\textstyle \bigsqcap_{i = 1}^m} X_i(x) \Rightarrow (x \mapsto \wt(t_i)) \rrangle (w_{\sigma[X_1/I_1]...[X_m/I_m]}) = (\wt(\tau_i))_{i \in \N}$$
and, for all $(J_1, ..., J_m) \in (2^{\dom(w)})^m$ with $(J_1, ..., J_m) \neq (I_1, ..., I_m)$, we have
$$\llangle W(\beta) \Rightarrow {\sqcap} x. {\textstyle \bigsqcap_{i = 1}^m} X_i(x) \Rightarrow (x \mapsto \wt(t_i)) \rrangle (w_{\sigma[X_1/J_1]...[X_m/J_m]}) = \top.$$
Then $\llangle \varphi \rrangle (w) = (\wt(\tau_i))_{i \in \N}$ and hence $[\![\varphi]\!](w) = \wt_{\A}(\rho) = [\![\A]\!](w).$
\end{itemize}
Hence $[\![\A]\!]$ is $\uWAL$-definable over $\mathbb V$.
\qed
\end{proof}

\subsection{Nondeterministic Case: Definability Implies Recognizability}

Now we turn to the proof of Theorem \ref{Theorem:Main}(b). First we show that $\eWAL$-definability implies nondeterministic recognizability. 

\begin{lemma}
Let $\one \in M$ be a default weight and $\psi \in \eWAL(\Sigma, \mathbb V_{\one})$. Then the quantitative $\omega$-language $[\![\varphi]\!]$ is recognizable over $\mathbb V$.
\end{lemma}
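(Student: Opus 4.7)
I would argue by induction on the number $k \ge 0$ of prefix $\sqcup$-quantifiers of $\psi$.

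\emph{Base case} ($k = 0$). Here $\psi \in \uWAL(\Sigma, \mathbb V_{\one})$ is a sentence, so Lemma \ref{Lemma:UnDefRec} provides an alphabet $\Gamma$, renamings $h : \Gamma \to \Sigma$, $g : \Gamma \to M$ and a sentence $\beta \in \MSO(\Gamma)$ with $[\![\psi]\!] = h((\val \circ g) \cap \mathcal L(\beta))$. Applying Lemma \ref{Lemma:3.3} to this decomposition yields recognizability of $[\![\psi]\!]$ over $\mathbb V$.

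\emph{Inductive step.} Write $\psi = \sqcup \mathcal X. \psi'$ where $\psi' \in \eWAL(\Sigma, \mathbb V_{\one})$ has $k - 1$ prefix $\sqcup$-quantifiers, and $\mathcal X \in V$ is the only possibly free variable of $\psi'$ (since $\psi$ is a sentence). The idea is to absorb $\mathcal X$ into the alphabet, invoke the induction hypothesis, and then project. Set $\Sigma' = \Sigma \times \{0, 1\}$; at each position of a word $u \in (\Sigma')^{\omega}$ read the second component as the indicator of a set $A \subseteq \mathbb N$ intended to play the role of $\sigma(\mathcal X)$ (for $\mathcal X \in V_2$) or of $\{\sigma(\mathcal X)\}$ (for $\mathcal X \in V_1$). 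Since $\uWAL$ subsumes $\MSO$, every occurrence of $\mathcal X$ in $\psi'$ can be rewritten as a boolean combination of the new letter-atoms $P_{(a, 1)}$, yielding a sentence $\widetilde{\psi'} \in \eWAL(\Sigma', \mathbb V_{\one})$ with $k - 1$ prefix quantifiers.

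By the inductive hypothesis, $[\![\widetilde{\psi'}]\!]$ is recognizable over $\mathbb V$. Let $\mathcal V \subseteq (\Sigma')^{\omega}$ be the recognizable $\omega$-language of \emph{valid} encodings: all of $(\Sigma')^{\omega}$ when $\mathcal X \in V_2$, and the $\omega$-words with exactly one letter of second component $1$ when $\mathcal X \in V_1$. By essentially the product construction in the proof of Lemma \ref{Lemma:3.3} (taking an unambiguous B\"uchi automaton for $\mathcal V$), $[\![\widetilde{\psi'}]\!] \cap \mathcal V$ is recognizable over $\mathbb V$. Letting $h : \Sigma' \to \Sigma$ project on the first component and invoking Lemma \ref{Lemma:3.3} once more, the renamed quantitative $\omega$-language $h([\![\widetilde{\psi'}]\!] \cap \mathcal V)$ is recognizable. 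A direct unwinding of Table \ref{Table:eWAL_Semantics} shows that this last language equals $[\![\psi]\!]$, closing the induction.

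The main technical subtlety is bookkeeping rather than anything deep: one has to verify that the fibre-sum induced by $h$, restricted to $\mathcal V$, matches term-by-term the sum over $I \subseteq \dom(w)$ (respectively $i \in \dom(w)$) appearing in the semantics of $\sqcup \mathcal X$, and that the MSO-style rewriting of atoms involving $\mathcal X$ preserves both the auxiliary and the proper semantics of $\uWAL$. Once the encoding is set up carefully, both verifications are immediate from the definitions.
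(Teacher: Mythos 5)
Your proof is correct in outline, but it takes a genuinely different route from the paper. The paper does not induct on the quantifier prefix: it absorbs \emph{all} the prefix variables at once into a single extended alphabet $\Gamma = \Sigma \times \Delta_{\varphi} \times 2^{\mathcal V}$ (one track for the weight assignment, one for the variable valuation), builds one MSO-sentence $\beta = \exists x_1 \ldots \exists X_l.(\phi \wedge \overline{\Phi(\varphi)})$ from the formula $\Phi(\varphi)$ of Lemma \ref{Lemma:Aux} plus a consistency formula $\phi$ for the encoding, verifies $[\![\psi]\!] = h((\val \circ g)\cap \mathcal L(\beta))$ by matching the fibre-sum of $h$ against the sum over valuations $\mathcal J \in \mathcal V_w$, and then invokes Theorem \ref{Theorem:Nivat}(b) together with B\"uchi's theorem. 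The payoff of the paper's route is that it yields the Nivat-style decomposition of $\eWAL$-sentences explicitly, in parallel with the unambiguous case. Your inductive peeling of one variable at a time is more modular and reuses Lemma \ref{Lemma:UnDefRec} as a clean base case, at the cost of needing closure properties of recognizable quantitative $\omega$-languages at every step. Two points there deserve care. First, the closure under intersection with a recognizable $\omega$-language that you appeal to is not literally Lemma \ref{Lemma:3.3} (which only intersects languages of the special form $\val \circ g$); you need the general product construction with an unambiguous B\"uchi automaton for $\mathcal V$ and the usual two-phase acceptance trick --- though in your specific application this is harmless, since $\mathcal V = (\Sigma')^{\omega}$ for second-order $\mathcal X$ and $\mathcal V$ is recognized by a small deterministic B\"uchi automaton for first-order $\mathcal X$. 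Second, the rewriting of atoms involving a first-order $\mathcal X$, in particular $x \mapsto m$ into something like $\sqcap y.\bigl((\bigvee_{a}P_{(a,1)}(y)) \Rightarrow (y \mapsto m)\bigr)$, only has the intended auxiliary semantics on words in $\mathcal V$; this is fine because the other fibres are zeroed out, but it should be said. With those two points made explicit, your argument goes through.
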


\begin{proof}
The idea of our proof is similar to the unambiguous case, i.e., via a decomposition of the $\eWAL$-sentence $\psi$. 
We show that there exist an extended alphabet $\Gamma$, renamings $h: \Gamma \to \Sigma$ and $g: \Gamma \to M$, and a sentence $\beta \in \MSO(\Gamma)$ such that $[\![\varphi]\!] = h((\val \circ g) \cap \mathcal L(\beta))$.
Note that, as opposed to the unambiguous case, the $\omega$-language $\mathcal L(\beta)$ is not necessarily $h$-unambiguous.

We may assume that ${\psi = {\sqcup} x_1... {\sqcup} x_k. {\sqcup} X_1... {\sqcup} X_l. \varphi}$ where $\varphi \in \uWAL(\Sigma, \mathbb V_{\one})$ and $x_1, ..., x_k$, $X_1$, ..., $X_l$ are pairwise distinct variables. 

As opposed to the unambiguous case, the extended alphabet $\Gamma$ must also keep track of the values of the variables $x_1, ..., x_k, X_1, ..., X_l$. Let $\mathcal V = \{x_1, ..., x_k, X_1, ..., X_l\}$ and $\Delta_{\varphi}$ be defined as in the unambiguous case. 
Then we let $\Gamma = \Sigma \times \Delta_{\varphi} \times 2^{\mathcal V}$ and define $h, g$ for all $u = (a, b, S) \in \Gamma$ with $a \in \Sigma$, $b \in \Delta_{\varphi}$ and $S \subseteq \mathcal V$ by $h(u) = a$ and $g(u) = b$ if $b \in M$ and $g(u) = \one$ otherwise. Finally we construct the MSO-sentence $\beta$ over $\Gamma$.
The construction of $\beta$ will be based on Lemma \ref{Lemma:Aux}. Let $\Phi(\varphi) \in \MSO(\Sigma \times \Delta_{\varphi})$ be the formula constructed in Lemma \ref{Lemma:Aux} for $\zeta = \varphi$. 
Let $\overline{\Phi(\varphi)} \in \MSO(\Gamma)$ be the formula obtained from $\Phi(\varphi)$ by replacing every predicate $P_{(a, b)}(x)$ occurring in $\Phi(\varphi)$ by the formula ${\bigvee (P_{(a, b, U)}(x) \; | \; U \subseteq \mathcal V)}$. Using the standard B\"uchi encoding technique we construct the formula $\phi \in \MSO(\Gamma)$ which encodes the values of $\mathcal V$-variables in the $2^{\mathcal V}$-component of an $\omega$-word over $\Gamma$. We let $\phi = \forall y. (\phi_1 \wedge \phi_2)$ where
\begin{align*}
\phi_1 &= \bigwedge_{x \in \mathcal V \cap V_1} ([R_{x, 1}(y) \wedge (y = x)] \vee [R_{x, 0}(y) \wedge (y \neq x)]), \\
\phi_2 &= \bigwedge_{X \in \mathcal V \cap V_2} ([R_{X, 1}(y) \wedge X(y)] \vee [R_{X, 0}(y) \wedge \lnot X(y)])
\end{align*}
and, for $\mathcal X \in V$ and $i \in \{0,1\}$, $R_{\mathcal X, i}(y)$ denotes the formula
$$
\bigvee (P_{(a,b,S)}(y) \; | \; a \in \Sigma, b \in \Delta_{\varphi} \text{ and } S \subseteq \mathcal V \text{ with } \mathcal X \triangleleft_i S)
$$
where ${\triangleleft}_1 = {\in}$ and ${\triangleleft}_0 = {\notin}$.

Then we let $\beta = \exists x_1...\exists x_k.\exists X_1...\exists X_l. (\phi \wedge \overline{\Phi(\varphi)})$. It remains to show that ${[\![\psi]\!] = h((\val \circ g) \cap \mathcal L(\beta))}$.  

Let $w = (a_i)_{i \in \N} \in \Sigma^{\omega}$. For any $u = (b_i)_{i \in \N} \in \Delta_{\varphi}^{\omega}$ we will abuse notation and write $(w, u)$ for $((a_i, b_i))_{i \in \N}$. For $w \in \Sigma^{\omega}$, let $\mathcal V_{w}$ denote the set of all mappings $\mathcal J: \mathcal V \to \dom(w) \cup 2^{\dom(w)}$ such that $\mathcal J(\mathcal V \cap V_1) \subseteq \dom(w)$ and $\mathcal J(\mathcal V \cap V_2) \subseteq 2^{\dom(w)}$. For a $w$-assignment $\sigma$ and $\mathcal J \in \mathcal V_w$, let $\sigma' := \sigma[\mathcal V/\mathcal J]$ denote the $w$-assignment such that $\sigma'|_{\mathcal V} = \mathcal J$ and $\sigma'_{V \setminus \mathcal V} = \sigma|_{V \setminus \mathcal V}$. Then
\begin{align*}
h((\val \circ g) \cap \mathcal L(\beta))(w) &= \sum (\val(g(u)) \; | \; \mathcal J \in \mathcal V_w \text{ and } (w, u)_{\sigma[\mathcal V/\mathcal J]} \models \Phi(\varphi)) \\
&\overset{(!)}{=} \sum_{\mathcal J \in \mathcal V_w} [\![\varphi]\!](w_{\sigma[\mathcal V/\mathcal J]}) \\
& = [\![\psi]\!](w).
\end{align*}
Then, the quantitative $\omega$-language $[\![\psi]\!]$ is recognizable over $\mathbb V$ by Theorem \ref{Theorem:Nivat} (b) and the classical B\"uchi theorem (which states that $\mathcal L(\beta)$ is a recognizable $\omega$-language). 
\qed
\end{proof}

\section{Nondeterministic Case: Recognizability implies Definability}

Now we show the converse direction of Theorem \ref{Theorem:Main}(b), i.e., that recognizability implies $\eWAL$-definability. 

\begin{lemma}
Let $\A$ be a WBA over $\Sigma$ and $\mathbb V$. Then the quantitative $\omega$-language $[\![\A]\!]$ is $\eWAL$-definable over $\mathbb V$.
\end{lemma}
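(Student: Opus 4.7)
The plan is to mimic the construction from Lemma \ref{Lemma:UnRecDef}, but to replace the universal merging prefix ${\sqcap}X_1 \dots {\sqcap}X_m$ used in the unambiguous case by the existential summation prefix ${\sqcup}X_1 \dots {\sqcup}X_m$ afforded by $\eWAL$. Intuitively, $\eWAL$'s $\sqcup$-quantifier sums the values of its body over all possible valuations, so if we can design a $\uWAL$-body that yields $\wt_{\A}(\rho)$ on valuations encoding a run $\rho$ and $\zero$ on all other valuations, then summing over valuations gives exactly $\sum_{\rho \in \Run_{\A}(w)} \wt_{\A}(\rho) = [\![\A]\!](w)$.

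Concretely, let $\A = (Q, I, T, F, \wt)$ and fix an enumeration $T = \{t_1, \ldots, t_m\}$. As in Lemma \ref{Lemma:UnRecDef}, there is an MSO-formula $\beta \in \MSO(\Sigma)$ with $\Free(\beta) = \{X_1, \ldots, X_m\}$ which is satisfied by $w_{\sigma}$ iff $(\sigma(X_1), \ldots, \sigma(X_m))$ is a partition of $\dom(w)$ encoding a run of $\A$ on $w$ (matching transitions, correct labels, initial start, some $F$-state visited infinitely often). Pick any default weight $\one \in M$ and consider the $\eWAL(\Sigma, \mathbb V_{\one})$-sentence
$$
\psi = {\sqcup}X_1 \dots {\sqcup}X_m.\,\Big( W(\beta) \,\sqcap\, {\sqcap}x.\, \bigsqcap_{i=1}^m \big[X_i(x) \Rightarrow (x \mapsto \wt(t_i))\big] \Big).
$$
I claim $[\![\psi]\!] = [\![\A]\!]$.

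To verify this, fix $w \in \Sigma^{\omega}$ and a $w$-assignment $\sigma$, and analyse the $\uWAL$-body on $w_{\sigma[X_1/I_1]\cdots[X_m/I_m]}$ for each tuple $(I_1, \ldots, I_m) \in (2^{\dom(w)})^m$. If this tuple does not encode a run, then $\llangle W(\beta) \rrangle = \bot$, so by the rules for $\sqcap$ the auxiliary semantics of the whole body is $\bot$ and its proper semantics is $\zero$. If the tuple encodes a (necessarily unique) run $\rho = (\tau_j)_{j \in \N}$, then $\llangle W(\beta) \rrangle = \top$, and since the $I_i$'s partition $\N$, for each position $j$ exactly one index $i_j$ satisfies $j \in I_{i_j}$ and we have $\tau_j = t_{i_j}$; hence for that $i_j$ the implication contributes $\top[j/\wt(t_{i_j})]$ while the other implications contribute $\top$, so $\bigsqcap_{i=1}^m [X_i(j) \Rightarrow (j \mapsto \wt(t_i))]$ evaluates to $\top[j/\wt(\tau_j)]$. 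Merging over $j \in \N$ produces the totally defined $\omega$-word $(\wt(\tau_j))_{j \in \N}$ (so the default weight is never used), whose proper semantics is $\val((\wt(\tau_j))_{j \in \N}) = \wt_{\A}(\rho)$. Summing over all tuples via the $\sqcup$-prefix then gives $\sum_{\rho \in \Run_{\A}(w)} \wt_{\A}(\rho) = [\![\A]\!](w)$.

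The only subtle point in this argument is the interaction of the $\sqcap$-merge inside the body with the partition property: we need (i) that valid run encodings make every position's contribution well-defined and unique, and (ii) that invalid encodings reliably short-circuit to $\bot$ via the $W(\beta)$ conjunct. Both rely on $\beta$ enforcing a partition of $\dom(w)$ and on $\bot \sqcap \cdot = \bot$; there is no obstacle of the kind encountered in the unambiguous direction, because non-unique witnesses are precisely what we want to sum over, and the default weight $\one$ plays no role since each valid run encoding covers all positions.
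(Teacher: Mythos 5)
Your proposal is correct and follows essentially the same route as the paper: the same enumeration of the transitions, the same MSO run-description $\beta$ with free variables $X_1,\dots,X_m$, and the same existential prefix ${\sqcup}X_1\dots{\sqcup}X_m$ over a body that assigns $\wt(t_i)$ to the positions in $X_i$, so that summation over valuations yields $\sum_{\rho\in\Run_{\A}(w)}\wt_{\A}(\rho)$. The one difference is that you guard the body with $W(\beta)\sqcap(\cdots)$ where the paper writes $W(\beta)\Rightarrow(\cdots)$; your variant handles the non-run valuations more cleanly, since with $\sqcap$ a tuple not encoding a run gives auxiliary semantics $\bot$ and hence contributes $\zero$ to the sum, whereas with $\Rightarrow$ it gives $\top$ and would contribute $\val(\one^{\omega})$, which need not equal $\zero$ for an arbitrary default weight $\one$.
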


\begin{proof}
Our proof is a slight modification of our proof of Lemma \ref{Lemma:UnRecDef}. Let ${\A = (Q, I, T, F, \wt)}$ be a nondeterministic WBA. Adopting the notations from the proof of Lemma \ref{Subsect:Unamb_Case}, we construct the $\eWAL(\Sigma, \mathbb V_{\one})$-sentence
$$
\psi = {\sqcup} X_1... {\sqcup} X_m. \big(W(\beta) \Rightarrow {\sqcap} x. {\textstyle \bigsqcap_{i = 1}^m} X_i(x) \Rightarrow (x \mapsto \wt(t_i)) \big).
$$
(where $\one$ is irrelevant for the definition of $\psi$). Now we show that $[\![\varphi]\!] = [\![\A]\!]$. Let $w \in \Sigma^{\omega}$. Then, using the correspondence between the values of $X_1, ..., X_m$ and the runs in $\Run_{\A}(w)$, we obtain
$$
[\![\psi]\!](w) = \sum_{\rho = (\tau_i)_{i \in \N} \in \Run_{\A}(w)} \val(\wt(\tau_i)) = \sum_{\rho \in \Run_{\A}(w)} \wt_{\A}(\rho) = [\![\A]\!](w).
$$
This shows that $[\![\psi]\!]$ is $\eWAL$-definable over $\mathbb V$.
\qed
\end{proof}

\section{Discussion}

In this paper we introduced a weight assignment logic which is a simple and intuitive logical formalism for reasoning about quantitative $\omega$-languages. Moreover, it works with arbitrary valuation functions whereas in weighted logics of \cite{DM12}, \cite{DP14} some additional restrictions on valuation functions were added. We showed that WAL is expressively equivalent to unambiguous weighted B\"uchi automata. We also considered an extension of WAL which is equivalent to nondeterministic B\"uchi automata. Our expressiveness equivalence results can be helpful to obtain decidability properties for our new logics. The future research should investigate decidability properties of nondeterministic and unambiguous weighted B\"uchi automata with the practically relevant objectives. Although the weighted $\omega$-automata models \cite{CDH08} do not have a B\"uchi acceptance condition, it seems likely that their decidability results about the threshold problems hold for 
B\"uchi acceptance condition as well. It could be also interesting to study our weight assignment technique in the context of temporal logic like LTL.

Our results obtained for $\omega$-words can be easily adopted to the structures like finite words and trees. We have also extended the results of this paper to the timed setting and obtained a logical characterization of {\em multi-weighted timed automata} (cf., e.g., \cite{BBL08}, \cite{LR05}). For the proof of this result we applied a Nivat decomposition theorem for weighted timed automata \cite{DP14}. Due to space constraints we cannot present this result here.

\end{document}